\documentclass[sigplan,screen,nonacm]{acmart}

\setcopyright{none}
\usepackage{array}
\newcolumntype{L}[1]{>{\raggedright\arraybackslash}p{#1}}
\newcommand{\E}{\mathbb{E}}
\usepackage{todonotes}

\usepackage{url}

\AtEndPreamble{%
  \theoremstyle{acmdefinition}%
  \newtheorem{assumption}[theorem]{Assumption}%
  \theoremstyle{acmplain}%
  \newtheorem{remark}[theorem]{Remark}%
}

\makeatletter
\AddToHook{begindocument/end}{%
  \setlength{\footskip}{48pt}%
  \expandafter\def\expandafter\ps@standardpagestyle\expandafter{%
    \ps@standardpagestyle
    \fancyfoot[L]{\footnotesize Preprint.}%
  }%
  \expandafter\def\expandafter\ps@firstpagestyle\expandafter{%
    \ps@firstpagestyle
    \fancyfoot[L]{\footnotesize Preprint.}%
  }%
  \pagestyle{standardpagestyle}%
}
\makeatother

\begin{document}

\title[First Attack, Final Offensive]{First Attack, Final Offensive}
\subtitle{The Dark Forest on an Open Roster}

\author{Harvey Dam}
\affiliation{%
  \department{Kahlert School of Computing}
  \institution{University of Utah}
  \country{USA}}

\begin{abstract}
The Dark Forest argument holds that a civilization that detects another should strike it at once. Existing formal models make the detected civilization the object of the strike and play it on a roster the attacker knows to be complete. This paper changes both choices. The object of hostility is remaining uncontrolled capacity to retaliate or to warn someone who can, and the roster is open: no attacker ever knows it has met everyone. A first strike is then rational only if the timing benefit of what it removes now rather than later is at least the disclosure loss from every survivor that learns of it. A survivor that can bring about the attacker's destruction enters that loss as a lump, not a per-unit rate, and the actors the attacker has never found may be such a survivor, one that no strike removes. Their capacity cannot be estimated, but what they can do is capped at the attacker's destruction, so the test against them asks one answerable question: a first strike is rational only if the attacker accepts that the strike may be its last attack. The Dark Forest premises, read as hypotheses, fix what a general attacker cannot estimate: hidden hunters exist, a hunter that verifies a hostile acts against it with probability at least $q$, and a hider is rarely found, so a believer's first strike is rational only if what it removes is worth a $q$-share of its survival, the whole of it as $q$ approaches one. With survival as the payoff, the profile in which every hunter strikes what it finds is not a Nash equilibrium whenever a strike is more visible to unfound hunters than a hider is findable, while the profile in which every hunter hides is. That visibility comparison is the decisive physical question; an attacker that treats its strike as unseen has assumed the roster closed.

\end{abstract}

\keywords{preemption, disclosure, hostility, remaining capacity, open roster, SETI, Dark Forest}

\maketitle

\section{Introduction}
\label{sec:intro}

After another civilization is detected, a common argument says that every rational observer attacks immediately. In that story, often called the Dark Forest~\citep{liu2008dark}, survival is the primary need of a civilization, civilizations expand while the universe's matter is finite, and chain of suspicion together with technological explosion make trust impossible. A detected civilization is therefore an existential threat to eliminate for one's own security.

That race is in use as a default, from the two-player Hobbesian trap of first contact~\citep{jebari2018game} to models that admit more civilizations and place them outside the strike (Section~\ref{sec:related}). All of them share two choices built into the player list rather than stated: the object of the strike is the civilization just found, and the roster is closed, so the attacker can know when whoever is left is nobody.

This paper changes both choices and follows the Dark Forest argument through with them. The object of hostility is remaining uncontrolled capacity to retaliate or to warn someone who can, not a count of civilizations. Write $[i]$ for the actors $i$ does not count as external; the leftovers are the actors outside $[i]$, and a leftover that can tell an actor able to strike $i$ counts as that actor's capacity does. Remaining capacity $R$ is the loss those leftovers could still inflict, capped at what the attacker can lose. Destroying the civilization just detected removes its capacity; the other leftovers stay in $R$, and so do the leftovers the attacker has not found, since no actor can verify that it has met everyone. The same remainder appears among unknown actors who are not interstellar; here the application is detection of an extraterrestrial intelligence (ETI).

Four hypotheses carry what follows. The roster is open. A leftover that learns of a strike can bring about the attacker's destruction with some probability, by striking or by warning one who can. A strike is more visible to the actors the attacker has not found than a hider is findable. And a hider is rarely found during the time it would have waited. The first is a premise about SETI; the second and fourth are the Dark Forest's chain of suspicion and hide, read as hypotheses (Table~\ref{tab:premises}); the third is not one of Liu's premises, and it is the one the conclusions turn on.

\paragraph{Contributions.}
The contribution is the argument, not the advice: restraint has been advised before, through steps that fail in the cases the advice is for (Section~\ref{sec:related}).
\begin{itemize}
\item \emph{Opening coverage.} A first attack is rational only if the timing benefit of what it removes now rather than later is at least the disclosure loss from every leftover that learns of it (Theorem~\ref{thm:opening-coverage}). A witness that can bring about the attacker's destruction enters that loss as a lump, not a per-unit rate, and the actors the attacker has never found may be such a witness, one that no opening removes. Their capacity cannot be estimated, but the loss they can inflict is capped at the attacker's destruction, so the test against them asks only whether the attacker accepts that the strike may be its last attack (Section~\ref{sec:open}).
\item \emph{The believer's last attack.} Read as hypotheses, the Dark Forest premises make the unfound witness certain to exist and to act with probability at least $q$: a believer's first strike is rational only if what it removes is worth a $q$-share of its survival, and as $q$ approaches one the believer accepts the strike as the last attack it will ever make (Proposition~\ref{thm:believer-ratio}, Remark~\ref{rem:last-attack}).
\item \emph{Strike or hide.} With survival as the payoff, the profile in which every hunter strikes what it finds is not a Nash equilibrium whenever a strike is more visible to the hunters one has not found than a hider is findable, while the profile in which every hunter hides is (Theorem~\ref{thm:not-equilibrium}).
\end{itemize}
The same accounting signs a transmission that reveals its sender, which under an observe-and-relay hypothesis becomes a witness every hostile receiver must pay for; that application is Appendix~\ref{app:beacon}.

\paragraph{How to read.}
Section~\ref{sec:related} sorts prior models by what a strike removes and who is left; Appendix~\ref{app:related} expands it. Sections~\ref{sec:model} and~\ref{sec:coverage} state the conditions in English with the formulas that fix them; full statements are in Appendix~\ref{app:statements} and proofs in Appendix~\ref{app:proofs}. Section~\ref{sec:restore} goes through the structures that would restore a first strike and says why each is contrived.

\section{Related work}
\label{sec:related}

Strike upon detection needs three inputs: an object of hostility, meaning what a hostile wants gone; a roster, meaning who is still in the game after the strike; and a timing premium, meaning why striking now beats waiting. Prior models fix the object as the civilization just found or as occupied volume, and fix the roster at two players, or at many players whose extra members enter as evidence, as witnesses, or as a case set aside (Table~\ref{tab:related}). In each a strike on the detected civilization is the whole action, and whoever is left afterwards sits outside the object of hostility. The models disagree about the advice, strike or do not strike, while sharing the object and the roster, so the correction is not a matter of siding with one camp.

\begin{table*}[t]
\centering
\small
\setlength{\tabcolsep}{4pt}
\begin{tabular}{L{1.15in}L{1.45in}L{1.25in}L{2.2in}}
\toprule
Model & Players & What a strike removes & Whoever is left afterwards \\
\midrule
Hobbesian trap~\citep{jebari2018game} & mankind and one extraterrestrial intelligence & the other player & nobody \\
\addlinespace
Multitude~\citep{jebari2024dark} & humanity and the detected civilization; many unobserved others & the detected civilization & evidence of a mechanism that prevents unilateral attack; focal point for Don't Attack \\
\addlinespace
Deterrence~\citep{korhonen2013mad} & attacker, victim, Nth civilizations & the victim civilization & witnesses who may warn others after a completed kill; one factor in escaping punishment \\
\addlinespace
Interactive POMDP~\citep{kuusela2024higher} & $n$ civilizations with common-knowledge locations & one other civilization, destroyed if weaker & noticing listed, then set aside; two-agent experiments \\
\addlinespace
This paper & every actor not grouped with the attacker; the attacker never knows it has met them all & remaining uncontrolled capacity $R$ & inside $R$ even if punishment after a kill could be assured away; a witness that can strike, or can warn one who can, is a lump of the attacker's stake, discounted by the probability that it is capable and acts, that the strike must pay for out of its timing benefit \\
\bottomrule
\end{tabular}
\caption{Models of strike upon detection, by who plays, what a strike removes, and what the model does with whoever is left. In the first four rows whoever is left sits outside the object of hostility and the roster is closed; this paper's roster is open. Coverage is the requirement that the timing benefit of what an opening removes be at least the disclosure loss from the leftovers it leaves watching.}
\label{tab:related}
\end{table*}

\paragraph{The argument.}
Liu's cosmic sociology starts from two axioms, survival as a civilization's primary need and continued growth under finite cosmic matter, together with chain of suspicion and technological explosion~\citep{liu2008dark}. From those premises the novel concludes that a civilization that detects another should treat it as an existential threat and strike for its own security. The premises are about survival under unverifiable intent and sudden capability shifts; the conclusion is about the civilization just found, and says nothing about the rest of the sky. Appendix~\ref{app:related} places the models in Table~\ref{tab:related} against the two-player literature, the transmission debate, the location problem, and occupancy models.

\paragraph{The closed roster.}
Those models share a second feature that is easy to miss because it is built into the player list rather than stated. Each roster is closed: the attacker knows who is in the game, so it can know when whoever is left is nobody. Two players is a claim to have met everyone; a stability index runs over a known list of states; an $n$-civilization simulation is handed its $n$; and even the accounts that admit others, as evidence~\citep{jebari2024dark} or as the probability that no capable Nth civilization exists~\citep{korhonen2013mad}, leave the strike itself as a move in a game whose players are known. In SETI this is not a harmless simplification: the field's founding difficulty is that detection is hard and incomplete, so the absence of a detection is not the absence of a civilization, and \citet{kuusela2024higher} say as much of their own roster. No actor can verify that it has met everyone, so from the attacker's side the leftover is never known to be empty and a strike on what one has found is never known to be a strike on everything. This paper states that as a premise (Assumption~\ref{ass:open-roster}); Section~\ref{sec:coverage} draws the consequences.

\paragraph{Shared conclusions, different arguments.}
This paper's conclusion has been stated before; the difference is in what carries it. \citet{korhonen2013mad} argues that unprovoked interstellar attack is a flawed strategy. His attacker escapes punishment with probability $P_{\mathrm{identified}}P_{\mathrm{hit}}P_{\mathrm{destroyed}}P_{\neg NC}$, where $P_{\neg NC}$ is the probability that no capable Nth civilization exists, and his witnesses, including offshoots ``able to warn others of an attack, even if the attack succeeds,'' are how that last factor comes to bite after the kill. Two steps fail. The Nth civilizations enter only through a punishment the attacker might escape, and he grants the escape himself, noting that a civilization may become ``practically invulnerable to any retaliation''; for that attacker his argument is silent, while here a leftover witness sits inside $R$ before any kill and the coverage test binds whatever retaliation can do. He leaves $P_{\neg NC}$ as a hazard the attacker estimates; for a Dark Forest believer it is zero by premise, which his own formula turns into certain punishment, a consequence he does not draw because his Nth civilizations are an estimate rather than something the attacker already holds. \citet{jebari2024dark} conclude that after a nearby advanced find both parties should play Don't Attack. Their step is that we should not be alive in a hostile dark forest, so ``there must be some mechanism (or several mechanisms) in place through which civilizations refrain from attacking us,'' which each party then adopts as a credence about the other ``regardless of what makes ETIs not attack each other.'' The mechanism is never named, the game remains the two-by-two table between humanity and the detected civilization, and the conclusion is conditional on a find having occurred; it is a belief about others substituting for trust, not a constraint on what an attacker that wants to strike must pay. This paper names the mechanism and derives it from the argument's own premises: a capable witness, found or not, is a lump of the attacker's stake (Corollaries~\ref{cor:capable-witness} and~\ref{cor:uncatalogued-witness}), and nothing in that waits on a detection. \citet{yu2015dark} reaches concealment by reading the rule as Hobbes's state of nature; here all-hide is an equilibrium of the premises themselves. Advice reached through a step that fails is not established by being stated, and the cases the failed step leaves uncovered, an attacker that can assure punishment away and a believer that has found nothing yet, are the cases the advice is for. Korhonen's cautious optimism about transmission is taken up in Appendix~\ref{app:related}.

\section{Remaining-capacity hostility}
\label{sec:model}

Hostility is a preference over physical outcomes. From a hostile's point of view, what can still warn others or retaliate is remaining uncontrolled capacity, and whether that capacity is labeled one civilization, a coalition, many copies, or a machine cluster does not change it. A count of civilizations depends on how one individuates; the physical outcome does not.

Write $R_i(x)$ for the remaining inflictable utility loss in outcome $x$: what remaining others can still take from $i$, capped at what $i$ can lose, $H_i$. The cap makes $R_i$ non-additive over actors. Two leftovers that can each destroy $i$ leave $R_i$ at the cap whether one or both remain, so removing one of them changes how likely $i$ is to lose everything, not $R_i$; the theorems therefore attach no rate to $R$ and are stated on what an opening removes and what it leaves watching. A remaining-capacity hostile's utility is $U_i^H(x)=a_i-R_i(x)$. Capacity that becomes $i$'s own no longer counts as uncontrolled, which is a change of ownership, not a kill score, and capacity that cannot reach or warn against $i$ within $i$'s horizon is already zero.

Warning and retaliation are two channels of one capacity. Retaliation inflicts loss with the witness's own capacity; warning inflicts it with someone else's, by aiming capacity that already exists at $i$, raising the chance it is used, or raising what $i$ must spend to remove it later. A leftover that can tell an actor able to strike $i$ has capacity against $i$ whether or not it can strike $i$ itself.

\paragraph{Grouping.}
Write $i\sim j$ when neither treats the other's remaining capacity as externally uncontrolled, and $[i]$ for the actors $i$ does not count as external (Definition~\ref{def:effective-coalition}). A hostile alliance is such a set; remaining-capacity hostility is between that set and the leftover independents outside it. Every condition below holds for whatever the set is, and a smaller set leaves more capacity external, so coverage is hardest for an attacker with no allies. At $i$'s first type-revealing strike, a leftover $j$ falls under one of four cases: the opening includes $j$; the strike is unseen by $j$, who neither observes it nor is told of it by one who did; $j$ survives as a responder and pays the continuation loss $\Theta$ of facing a revealed hostile; or $j$ enters $[i]$ (Remark~\ref{thm:no-external-follower}). An unseen strike is a stronger strike, not a different kind: it removes $j$'s capacity to respond without destroying $j$, and it is unseen by $j$ only if no witness can relay it. A race to strike the one detected civilization is not remaining-capacity hostility among many hunters; a strike on one target leaves the other leftovers as remaining threat unless extra structure is added (Section~\ref{sec:restore}).

\paragraph{Open roster.}
The leftover set $N$ is finite, but $i$ never knows it to be complete: under $i$'s information $\mathcal I_i$ it splits into the actors $i$ has found and an uncatalogued part $\mathcal U_i$ with positive probability of being nonempty, and an opening can remove only actors $i$ has found (Assumption~\ref{ass:open-roster}). The assumption says only that no actor can verify it has met everyone; how much capacity the unmet actors have is a credence $i$ holds.

\section{Opening coverage}
\label{sec:coverage}

\subsection{What a first strike must pay for}
\label{sec:open}

At $i$'s first type-revealing strike the leftovers outside $[i]$ form a finite set $N$, of which $i$ has found only part. The opening removes a catalogued set $K\subseteq N$, by destruction, credible control, or concealment; the actors $i$ has not found are outside $K$ whatever $K$ is. The witnesses $W=N\setminus K$ are the survivors who learn of the strike, by observing it or by being told, and respond by striking $i$ or by warning someone who can. Attack advantage splits as
\begin{equation}
\label{eq:advantage-split}
A(K)=B(K)-D(N\setminus K)-C(K),
\end{equation}
the timing benefit of removing $K$ now rather than later, less the disclosure loss from the witnesses learning of a revealed hostile, less campaign cost. $B$ and $D$ are nondecreasing under inclusion and need not be additive, since two witnesses who warn the same third party do not double its response; a loss a witness's response inflicts on $i$'s later removal of $K$ is booked in $D$.

\begin{theorem}[Opening coverage]
\label{thm:opening-coverage}
If a first attack with opening $K$ is rational, then
\begin{equation}
\label{eq:coverage}
B(K)\ge D(N\setminus K):
\end{equation}
every witness the strike leaves is paid for out of the advantage of striking now rather than later, and no rational first attack leaves watching a set of witnesses whose disclosure loss exceeds the timing benefit of removing every leftover now.
\end{theorem}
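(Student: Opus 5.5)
The argument compares the attack with a concrete alternative: waiting and acting later. I read ``rational'' as meaning the attack weakly beats that alternative, so that $A(K)\ge 0$ in the decomposition~\eqref{eq:advantage-split}. I would first fix the baseline. Advantage is measured relative to the policy that does not reveal its type now and removes the same capacity later, if at all. Under that baseline $B(K)$ is exactly the gain from timing. No witness learns of a revealed hostile, so the disclosure term is zero. The campaign cost is the increment $C(K)$ over the deferred campaign, and I would take it to be nonnegative. If the full statement in Appendix~\ref{app:statements} lets $C$ be negative (cheaper now), I would add that hypothesis explicitly or absorb the saving into $B$, since that saving is itself a timing benefit.

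The first claim then follows in one line. Rationality gives $0\le A(K)=B(K)-D(N\setminus K)-C(K)\le B(K)-D(N\setminus K)$, which is~\eqref{eq:coverage}. The sentence ``every witness is paid for'' is a reading of this inequality. It holds because $D$ is defined on the whole witness set $W=N\setminus K$ and is not split into per-witness rates, so nothing requires additivity.

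The second claim is the monotone extension. Suppose a rational opening $K$ exists, and let $W$ be a witness set with $W\subseteq N\setminus K$. Since $K\subseteq N$ and $B$ is nondecreasing under inclusion, $B(K)\le B(N)$. Since $D$ is nondecreasing, $D(W)\le D(N\setminus K)$. Chaining these with~\eqref{eq:coverage} gives $D(W)\le D(N\setminus K)\le B(K)\le B(N)$. So no rational first attack leaves watching a set whose disclosure loss exceeds $B(N)$, the timing benefit of removing every leftover now.

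One point needs care. $B(N)$ may be defined only as a formal value, because under Assumption~\ref{ass:open-roster} the uncatalogued part $\mathcal U_i$ can never actually be removed. I would state explicitly that monotonicity of $B$ is used on the whole lattice of subsets of $N$, not only on feasible openings, so that $B(N)$ is a legitimate upper bound.

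The main obstacle is the baseline, not the algebra. The claim ``rational implies $A(K)\ge 0$'' depends on $D$ being booked as the loss relative to an unrevealed alternative, and on that alternative being available to the attacker. I would pin this down first from the definitions in Appendix~\ref{app:statements}. If the paper's notion of rationality is an expected-utility comparison under $\mathcal I_i$, I would take expectations of each term and repeat the same chain, using linearity of expectation and monotonicity pointwise in the realized $N$.
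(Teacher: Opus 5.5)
Your proof is correct and is the paper's argument: rationality read as $A(K)\ge 0$, together with $C(K)\ge 0$, gives \eqref{eq:coverage}, and monotonicity of $D$ and $B$ gives the chain $D(W')\le D(N\setminus K)\le B(K)\le B(N)$ for the second claim. The paper also takes $C(K)\ge 0$ as given, so your explicit flags on that hypothesis and on the waiting baseline are added care rather than a gap.
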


$B$ is not the value of eventually destroying the remaining capacity, which is available after waiting too; it is only the advantage of doing it now. The attacker decides on its assessment of the witnesses, and under the open roster that assessment has a floor no opening lowers: a rational attack given $\mathcal I_i$ has $B(K)\ge\hat D_i(K)\ge\E[D(\mathcal U_i)\mid\mathcal I_i]$, and the last term does not depend on $K$ (Corollary~\ref{cor:assessed-coverage}). Destruction cannot lower the floor, since $i$ cannot target what it has not found, and concealment cannot bring it to zero, since against an unfound actor concealment is a credence about light cones that physics can lower but never verify to zero. An attacker that removes every leftover it knows of has not met coverage unless its timing benefit also pays for the ones it does not.

Concealment is coverage only over the whole relay set. A survivor that neither observes the strike nor is told of it can neither warn nor retaliate about it, so it counts as removed. But a witness can relay, and a warning leaves at the speed of light while a kinetic payload does not, so an observer in the future light cone of a detectable signature can tell a capable actor before the payload arrives. Hiding the strike from the actors able to retaliate is not coverage if any observer can tell them.

\paragraph{A capable witness.}
Split the timing benefit as $B(K)=B_{\mathrm{haz}}(K)+B_{\mathrm{cost}}(K)$: the loss $K$ would have inflicted on $i$ during the wait, at most $h(K)H_i$ where $h(K)$ is the probability that unremoved $K$ strikes $i$ during the wait, and the growth in the cost of removing $K$ later rather than now, which is relative, net of $i$'s own advance and of whatever $i$ gains from $K$ through the wait. If $i$ assigns probability $\pi_w$ that a leftover $w$ can bring about $i$'s destruction, by striking or by warning an actor that can, and $q_w$ is the increase that learning of the strike causes in the probability that it does, then leaving $w$ watching is rational only if
\begin{equation}
\label{eq:capable-witness}
B_{\mathrm{cost}}(K)\ge \bigl(\pi_wq_w-h(K)\bigr)H_i
\end{equation}
(Corollary~\ref{cor:capable-witness}). A witness that can bring about the attacker's destruction is not a per-unit reason to wait but a lump, a fixed amount the timing benefit must exceed. It need not be able to strike: a weak leftover that can tell a strong one is the same lump. Uncertainty discounts the lump by $\pi_w$ and by $q_w$; it does not turn it into a rate. The lump approaches the whole stake $H_i$ only as both probabilities approach one and $h(K)$ approaches zero: a witness believed capable, made nearly certain to act by the strike, facing an attacker that would rarely have been struck during the wait. Short of that the bound is a discounted fraction of the stake, and the text says whole stake only where those probabilities are near one.

The same lump arises from a witness the attacker has not found, and that one it cannot choose to include. If $i$ assigns probability $\pi_u$ that some unfound actor can bring about its destruction and learns of the strike, then for every catalogued $K$, including the one that removes every actor $i$ has found, $\hat D_i(K)\ge\pi_uq_uH_i$ and $B_{\mathrm{cost}}(K)\ge(\pi_uq_u-h(K))H_i$ (Corollary~\ref{cor:uncatalogued-witness}: the strike is type-revealing, and the actors that learn of it are the ones it cannot remove). This is the closed-roster correction: a hostile that has removed every leftover it knows of still faces this bound and cannot meet it by striking more widely, because the witness it is paying for is one it has not found. That witness cannot be priced. Its capacity has never been observed, so $\pi_u$ is a credence about the unmet sky with no estimate behind it, and the paper says nothing about its size. What is known is the cap: whatever the unfound do, the most the attacker can lose is $H_i$, its own destruction. The coverage test against the unfound therefore does not ask the attacker to value a quantity it cannot estimate. It asks one question the attacker can answer, whether what the strike removes now is worth risking everything it has, and a first strike is rational only if the attacker accepts that the strike may be its last attack. For a Dark Forest believer $\pi_u$ is not a credence: the premise is that the forest is populated with hidden hunters~\citep{liu2008dark}, so a capable actor the believer has not found is watching any strike it makes, $\pi_u=1$, and chain of suspicion says that actor acts with probability at least $q$.

\paragraph{Constant rates.}
The two-player literature prices capacity at a constant rate per unit. When $B(K)\le b\,\mu(K)$ and $D(W)\ge d\,\mu(W)$ for an additive measure $\mu$ with $\mu(N)=R$, a rational opening of measure $m$ requires $m/R\ge d/(d+b)$ (Corollary~\ref{cor:constant-rate}, Appendix~\ref{app:constant-rate}). The share is exact only below saturation, before any witness set's loss approaches the cap; there the lumps above are the sharp statements. Appendix~\ref{app:illustration} instantiates the bounds at chosen numbers, which are not estimates.

\subsection{The believer's last attack}
\label{sec:coop}

Now apply the Dark Forest premises to the attacker itself. Each premise enters as a hypothesis whose reading is stated in Table~\ref{tab:premises}, and what this section and the next derive is about the premises so formalized: a leftover that has learned of $i$'s opening acts on it with probability at least $q>0$; unremoved capacity finds and strikes the hidden $i$ during the wait with probability at most $f$; and $\pi_u=1$.

\begin{table}[t]
\centering
\small
\setlength{\tabcolsep}{3pt}
\begin{tabular}{L{0.75in}L{1.5in}L{0.7in}}
\toprule
Premise & Hypothesis here & Enters \\
\midrule
Populated forest & some capable actor $i$ has not found lies in the future light cone of any strike it makes: $\pi_u=1$, $\mathcal U_i\neq\emptyset$ & \eqref{eq:believer-uncatalogued}; Thm.~\ref{thm:not-equilibrium} \\
\addlinespace
Chain of suspicion & a leftover that learns of the opening acts on it with probability at least $q>0$; the unfound hunters strike what they verify & \eqref{eq:believer-lump}; Thm.~\ref{thm:not-equilibrium}(c) \\
\addlinespace
Hide & unremoved capacity finds and strikes the hidden $i$ during the wait with probability at most $f$; the find probabilities $\phi_j$ on the right of \eqref{eq:visible-strikes-main} & \eqref{eq:believer-lump}; \eqref{eq:visible-strikes-main} \\
\addlinespace
Survival first & the payoff is the probability of surviving the horizon & Sec.~\ref{sec:survival} \\
\addlinespace
Technological explosion & enters only as $B_{\mathrm{cost}}$, the growth in the cost of removing $K$ later rather than now, net of $i$'s own advance & \eqref{eq:believer-uncatalogued} \\
\addlinespace
Loud strike (added here) & a strike is more visible to the hunters $i$ has not found than a hider is findable, \eqref{eq:visible-strikes-main} & Thm.~\ref{thm:not-equilibrium} \\
\bottomrule
\end{tabular}
\caption{How the Dark Forest premises enter as hypotheses. The first five rows are readings of Liu's premises~\citep{liu2008dark}; the theorems are about the middle column, and whether the literary premise entails that reading is argued in the text, not assumed. The last row is a physical claim the argument needs and does not state, and it is the one the theorems turn on.}
\label{tab:premises}
\end{table}

\begin{proposition}[Believer's ratio]
\label{thm:believer-ratio}
Under those premises, a rational first attack that leaves watching a leftover believed capable with probability $\pi$ requires
\begin{equation}
\label{eq:believer-lump}
B_{\mathrm{cost}}(K)\ge (\pi q-f)\,H_i ,
\end{equation}
and for every catalogued $K$, whether or not it leaves any known leftover watching,
\begin{equation}
\label{eq:believer-uncatalogued}
B_{\mathrm{cost}}(K)\ge (q-f)\,H_i .
\end{equation}
\end{proposition}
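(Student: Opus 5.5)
The plan is to obtain both inequalities as direct specializations of Corollary~\ref{cor:capable-witness} and Corollary~\ref{cor:uncatalogued-witness}, substituting the Dark Forest hypotheses of Table~\ref{tab:premises} for the generic probabilities. Start from Theorem~\ref{thm:opening-coverage} in its assessed form (Corollary~\ref{cor:assessed-coverage}): a rational first attack given $\mathcal I_i$ satisfies $B(K)\ge\hat D_i(K)$. Split $B(K)=B_{\mathrm{haz}}(K)+B_{\mathrm{cost}}(K)$, and recall that $B_{\mathrm{haz}}(K)\le h(K)H_i$, where $h(K)$ is the probability that unremoved $K$ strikes $i$ during the wait.

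For \eqref{eq:believer-lump}, take the watching leftover $w$ with $\pi_w=\pi$. The chain-of-suspicion hypothesis says that a leftover that learns of the opening acts on it with probability at least $q$. I would read this as making the strike-induced increase satisfy $q_w\ge q$. The justification is that without the strike $i$ stays hidden and unverified, so there is nothing for $w$ to act on beyond the baseline find probability, and that baseline is already charged to $h$. The hide hypothesis bounds the probability that unremoved capacity finds and strikes the hidden $i$ during the wait by $f$, so $h(K)\le f$. Substituting into \eqref{eq:capable-witness} gives
\[
B_{\mathrm{cost}}(K)\ \ge\ (\pi_w q_w-h(K))H_i\ \ge\ (\pi q-f)H_i .
\]
Both substitutions move in the right direction: raising $q_w$ and lowering $h$ only strengthen the right-hand side.

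For \eqref{eq:believer-uncatalogued}, apply Corollary~\ref{cor:uncatalogued-witness} in the same way. The populated-forest premise sets $\pi_u=1$ and $\mathcal U_i\neq\emptyset$, with an unfound capable actor in the future light cone of any strike, so that actor learns of the strike. That corollary's bound holds for every catalogued $K$, because $K$ cannot contain unfound actors. With $q_u\ge q$ and $h(K)\le f$ as before, this yields $B_{\mathrm{cost}}(K)\ge(q-f)H_i$. The bound holds regardless of whether $K$ leaves any known leftover watching.

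The main obstacle is the step that identifies $q_w$, the \emph{increase} in action probability caused by the strike, with the hypothesis that a witness acts with probability at least $q$. The hypothesis bounds a level, but the corollaries need an increment. I would first try to define the baseline so that it is exactly the hidden-find event already bounded by $f$. The increment is then at least $q$ minus a baseline that has been absorbed into the $-f$ term, with no double counting. If that decomposition fails, I would keep the exact expression $(\pi(q-p_0)-h)$ and note that under hide the baseline $p_0\le f$. That gives the stated bound with $f$ counted in a single place, as long as the baseline and $h$ concern the same event, which is the wait-period find.
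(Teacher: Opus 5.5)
Your proof matches the paper's: \eqref{eq:believer-lump} is Corollary~\ref{cor:capable-witness} with $\pi_w=\pi$, $q_w\ge q$, $h(K)\le f$, and \eqref{eq:believer-uncatalogued} is Corollary~\ref{cor:uncatalogued-witness} with $\pi_u=1$, $q_u\ge q$, $h(K)\le f$. The paper takes $q_w\ge q$ and $q_u\ge q$ directly as its reading of chain of suspicion, so your level-versus-increment detour is not needed; as stated it also would not work, because $h(K)$ is the wait-period hazard from the targets $K$ only, so $w$'s own baseline is not ``already charged to $h$'' unless $f$ is read as bounding all unremoved capacity together.
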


The difference $\pi q-f$ is how much more the believer fears a verified capable witness than it expects to be found while hidden. \citet{jebari2024dark} argue that hiding from an advanced civilization within a few hundred light-years fails within centuries; for such a believer $f$ is not small and both bounds relax. The populated-forest premise is not an estimate of the unfound but what the believer believes, and why it hides.

\begin{remark}[The last attack]
\label{rem:last-attack}
Under the premises of Proposition~\ref{thm:believer-ratio}, a believer that strikes first stakes at least a $(q-f)$-share of its survival on that strike, and as $q$ approaches one with $f$ small, accepts the strike as the last attack it will ever make. The strike's signature leaves at the speed of light, so concealment cannot be regained; the hunters the premise says are watching have not been found, and the attacker cannot count on finding them before they act, so it cannot plan to remove them later; and chain of suspicion, applied by them to a hostile they have now verified, says they act with probability at least $q$. The first strike is therefore rational only if what it removes now is worth at least that share of the believer's survival, which is what~\eqref{eq:believer-uncatalogued} says.
\end{remark}

The Dark Forest's technological explosion is the claim that such growth is available. Here it is nothing more than $B_{\mathrm{cost}}$, a relative term net of the attacker's own advance, which need not favor the target at all; the condition asks only whether that term is about the attacker's whole stake.

\subsection{Strike or hide}
\label{sec:survival}

Liu's first axiom makes survival the primary need. Taken literally it fixes the payoff: the attacker's utility is its probability of surviving its horizon. Let hunter $j$ find a hidden $i$ with probability $\phi_j$ over that horizon, learn of a strike by $i$ with probability $v_j$, and strike and destroy a located $i$ with probability $r_j$, raised to $q_j\ge r_j$ once $i$ is a verified hostile; a hunter that strikes and fails to destroy is inside $1-r_j$. Two things follow (Appendix~\ref{app:statements}). First, how lethal the hunters are drops out of the sign of the strike-or-hide comparison: raising every $r_j$ and $q_j$ together lowers both sides, and to first order $i$ strikes $K$ only if the chance that $K$ would have found and struck it during the wait exceeds the expected number of hunters, found or not, that see the strike and act on it (Proposition~\ref{thm:strike-or-hide}). Second, if the probability that $i$'s strike goes unseen by the hunters $i$ has not found is at most the probability that a hiding $i$ goes unfound by everyone,
\begin{equation}
\label{eq:visible-strikes-main}
\prod_{j\in\mathcal U_i}(1-v_j)\ \le\ \prod_{j\ne i}(1-\phi_j),
\end{equation}
then in the profile in which every hunter strikes whatever it locates every hunter prefers to hide, so the Dark Forest profile is not a Nash equilibrium, while the profile in which everyone hides is; and for a hunter that holds the Dark Forest premises about the unfound, striking first is never a best response (Theorem~\ref{thm:not-equilibrium}). Read as in Table~\ref{tab:premises}, the Dark Forest's premises are the hypotheses of that theorem, with one addition the argument needs and does not state: \eqref{eq:visible-strikes-main}, that a strike is more visible to the hunters one has not found than a hider is findable. Where that holds, the argument's conclusion is not something every hunter can rationally do: shoot first to survive, and shooting first lowers your chance of surviving, cannot both hold, and the premises support the second. That is a statement about the world the premises describe (Remark~\ref{rem:self-defeating}). The hazard a hunter is told to preempt, its target's own first strike, is one a target that holds the premises does not make as a best response, so the urgency that would restore the first strike comes only from a target that does not reason as the argument requires and that the hunter cannot hide from.

\subsection{What would restore a first strike}
\label{sec:restore}

The conditions leave a first strike rational in a few identifiable cases. Each is contrived. Concealment is not a separate case but a value of a quantity already in the model: every attacker wants zero witness capacity, the opening removes what it can by destruction, control, or hiding the strike, and what is left is the disclosure loss the conditions price. Against the actors the attacker has not found that quantity can never be verified to be zero (Section~\ref{sec:open}); an attacker that acts as if it were zero has closed the roster by assumption, which is the step this paper removes. Urgency is contrived for a different reason. An attacker with the capability to destroy other civilizations has been patient for as long as that capability took to build, and the case requires it to become, on the finding of one target, impatient enough to stake its survival on removing that target now rather than later. An actor with that disposition is not one the argument's own premises expect to survive to that level of capability; the patience that got it there is the disposition the case asks it to abandon. A two-civilization duel with no remaining independent is the closed roster restated; the attacker cannot verify that it is in that case, so acting on it is acting on a credence that the unfound are absent. Resource capture that moves the target's capacity into $[i]$, and asymmetry that designates a unique cheapest attacker inside $[i]$, change what happens to the target and within the alliance; neither touches the unfound witness, so neither meets~\eqref{eq:uncatalogued-witness}. An incomplete strike that leaves a motivated adversary supplies urgency for a second strike, not a first; the type has already been revealed. Survival under suspicion supplies none of these.

\section{Limitations}
\label{sec:limitations}

The paper estimates none of the quantities that appear in its displays. How likely a hidden civilization is to be found over a given horizon, $\phi_j$ and $f$; how far the signature of an interstellar strike can be detected and attributed to its author, $v_j$; how much of a hostile's remaining capacity has never been observed, $\pi_u$; how a verified hostile is treated once found, $q$; and what a beacon costs its sender, $c_{\mathrm{tx}}$, are questions about the real universe, and some of them, the detectability and attribution of a strike and the reach of hiding, are questions astrophysics could constrain. This paper does not attempt those estimates. Its conditions say what would have to be true of those quantities for a first strike to be rational; whether it is true of any actor in the sky is not something the paper pays for, and no number in it should be read as a claim that it is.

\section{Conclusion}
\label{sec:conclusion}

Restraint is not new advice. What this paper adds is an argument that binds where the earlier ones do not: the witness the attacker cannot remove is put there by the argument's own premises, before any detection and whatever retaliation can do.

\paragraph{Implications.}
The strike-upon-detection conclusion does not follow from detection. Two of the argument's three ingredients, hiding and the chain of suspicion, work against a first strike once they are applied to the attacker as well as to the target, and the conclusion survives only on removal-cost growth worth the share of its survival the attacker stakes, which is all of it as chain of suspicion is read at full strength. Read as a rule for a whole forest of believers, the argument is not an equilibrium of the game its premises define, while the profile in which everyone hides is (Theorem~\ref{thm:not-equilibrium}). That is a statement about the world the premises describe, not about ours: a forest in which every civilization hides and shoots is contradicted by any civilization that transmits and does not strike, this one included, and the paper's conditions bind any rational hostile with the object of Section~\ref{sec:model}, whatever the real sky holds, with no requirement that a civilization that detects a signal be one. Observability is a public good: every additional independent that can see a strike can only raise what any attacker must pay for. A model that wants to speak to a populated sky must carry the remainder as an open roster rather than as a prior or a punishment probability. The same accounting reaches transmission: a signal that reveals its sender adds the sender to every hostile receiver's witnesses (Appendix~\ref{app:beacon}), so the usual argument that broadcasting can only add exposure is incomplete; whether the sender's expected payoff from a given beacon exceeds its cost of transmitting depends on parameters the paper does not estimate.

\paragraph{Open questions.}
Which no-shoot profile learning or evolution selects is open, and a dynamic model in which capacities grow at different rates would say when $B_{\mathrm{cost}}$ dominates coverage.

\appendix
\section{Related work (extended)}
\label{app:related}

The main-text Related Work keeps the contrast table and the closed-roster point. This appendix expands the prior models and debates those rows summarize.

\paragraph{Two players.}
The two-party first-strike problem is older than SETI. Schelling showed how a temptation to strike first too small by itself to motivate an attack is compounded by each side's fear that the other fears it~\citep{schelling1960strategy}; \citet{wohlstetter1959delicate} made deterrence turn on the retaliatory capacity that survives a first strike, which is remaining capacity to retaliate in the two-party case; \citet{intriligator1984arms} model the outbreak of war along a two-nation arms race. The economics and international-relations formalization keeps the same remainder: costly war between two states through private information or a commitment problem, with large first-strike advantages making every bargain unenforceable~\citep{fearon1995rationalist}; two-player bargaining with a costly outside option of imposing a settlement~\citep{powell1996shadow}; preemption from first-strike advantage as one of three commitment problems, and the cost of deterring an attack against the cost of eliminating the threat~\citep{powell2006commitment}; Schelling's surprise-attack dilemma with incomplete information about preferences~\citep{baliga2004arms}, and the game of fear plus greed named a Hobbesian trap~\citep{baliga2012hobbes}; predatory against preemptive incentives~\citep{chassang2010conflict}; offense dominance~\citep{jervis1978cooperation}. In each, the capacity that survives a strike belongs to the one adversary struck.

\citet{jebari2018game} bring that structure to first contact as a Hobbesian trap. After contact, mankind and one extraterrestrial intelligence are the two players of the Game of Stars, and a first strike is a risk-dominant equilibrium even when both players prefer peaceful coexistence. The same paper makes location the hazard: knowledge of an extraterrestrial intelligence and its location is an information hazard, and advertising our location is reckless. With two players the other player is all remaining external capacity, so a strike on the detected target is the whole job and being revealed can only add exposure. That roster is closed by fiat, since two players is a claim to have met everyone, and it has no remainder, so the two questions this paper answers, how much of the remainder a first attack must cover and when a transmission that reveals the sender helps the sender, cannot be posed. The debate's phrase for such a transmission is location-revealing; what a receiver actually obtains is a transmitter's apparent position at emission and a posterior over the sender's capacity, and Appendix~\ref{app:beacon} uses only that.

\paragraph{Many players, leftovers outside the object.}
SETI models that add more civilizations put the extra ones somewhere other than the object of the strike. \citet{jebari2024dark} treat a nearby advanced find as selection: such a find is more likely if intelligent life is abundant, so it is evidence of many unobserved civilizations, and since we are still here it is evidence of some mechanism that prevents unilateral attacks. The game they analyze stays the two-by-two Attack / Don't Attack table between humanity and the detected civilization; the multitude enters as a belief in that mechanism, which makes Don't Attack a focal point and reverses the earlier trap toward the payoff-dominant peace equilibrium. That already undercuts a two-agent Dark Forest. They come nearest to an open roster, since the unobserved others are real to the analysis, but those others enter as evidence for a peace mechanism and the game that is played is still closed at two; the multitude changes priors and focal points, and is not remaining uncontrolled capacity in the light cone of this strike. Their selection step is Proposition~\ref{thm:jebari-selection} with remaining capacity in place of a civilization count.

\citet{korhonen2013mad} applies Cold War deterrence to interstellar preventive strike and argues against unprovoked attack: even if the attacker completely eliminates the victim, Nth civilizations may take notice and reply, and the probability that no capable Nth civilization exists is one factor in the attacker's chance of escaping punishment. That factor is where roster uncertainty lives in his account, and it is the right place for it; what no attacker can do is drive it to one, since the absence of a detection is not the absence of a civilization. He already has warning in that argument: offshoots of the victim or the attacker ``would seem to be likely to be in contact with each other, and therefore able to warn others of an attack, even if the attack succeeds''~\citep[Sec.~3.4]{korhonen2013mad}. That warning is how an Nth civilization learns of a kill already attempted, still a factor in the attacker's chance of escaping punishment after the victim is gone; those leftovers are responders to a failed escape, and their term is zero for an attacker who could assure retaliation away.

Remaining-capacity hostility is a different comparison on both counts. Leftover independents sit inside remaining external capacity $R$ even if a punishment term after a completed kill could be assured away, so such an attacker still fails opening coverage while a two-civilization duel meets it. Warning here is remaining capacity inside $R$, not a message that starts after-the-fact punishment: a leftover that can tell a capable actor enters the coverage test as a lump of that actor's threat, discounted by the probability that the leftover is capable and acts, before any kill is completed, whether or not the leftover itself can strike.

From his deterrence Korhonen draws cautious optimism for METI: a nearby civilization not far ahead of us would likely be deterred, and one far enough ahead to be invulnerable could detect us with or without METI. That sign is deterrence by the sender's own retaliation for a near-peer receiver, plus prior detectability, with the far-advanced receiver handled by the assumption that it ``would seem to have little reason to wish us harm''; a sender that cannot retaliate gets nothing from the first, and the second is the unknown motive the METI debate has never settled. The beacon results of Appendix~\ref{app:beacon} are signed by whether the receiver can include the sender in an opening, not by whether it fears the sender's weapons: Korhonen's deterrence needs the sender to retaliate, while a beacon, under the observe-and-relay hypothesis of Assumption~\ref{ass:observe-relay}, deters by witnessing, since the sender need not strike, only tell someone who can (Remark~\ref{rem:deterrence-witnessing}).

\citet{kuusela2024higher} simulate an $n$-civilization interactive partially observable Markov decision process (POMDP) under intent uncertainty: selfish types produce frequent war, and even weakly universalist types can preempt when uncertain that a growing peer will remain nonthreatening. Players and targets are discrete civilizations with common-knowledge locations; the actions are hiding, attacking one other civilization, or doing nothing; payoffs turn on own survival and private costs; a stronger attacker destroys a weaker target completely. They list the deterrence objections, including that other civilizations could notice the attack, then run two-agent experiments because the interactions are pairwise: the leftover is named and set aside. The roster itself is closed, and they say so, calling it one of their biggest assumptions ``that civilisations know of the existence and location of other civilisations in the universe. This is clearly not a valid assumption: discovering other intelligent life is incredibly challenging''~\citep[Discussion]{kuusela2024higher}, and listing open-agent models as future work.

Nuclear-stability and alignment models put leftovers in coalition force exchange or in an alignment menu, still outside remaining uncontrolled capacity after a strike that reveals the attacker as hostile. \citet{best1995firststrike} measure first-strike stability in a multipolar nuclear world as the minimum first-to-second-strike cost ratio over all coalitions of the known nuclear-weapon states, with leftovers inside those coalitions; \citet{powell1999shadow} studies a three-state leftover as an alignment choice, to bandwagon, to balance, or to wait while the other two fight. Free-riders on a public good, in volunteer's-dilemma or war-of-attrition form~\citep{diekmann1985volunteer,bliss1984dragon}, are a different leftover again. In all of these a strike on the detected civilization or on a named coalition is the whole action, and what remains afterwards enters as a prior, a punishment probability, a stability index, an alignment menu, or not at all; the leftover independent is outside what the hostile wants gone, which is the step the coverage requirement of Section~\ref{sec:coverage} reverses.

\paragraph{Transmission.}
The debate over messaging to extraterrestrial intelligence (METI) weighs exposure against a watcher whose motives are unknown. \citet{gertz2016reviewing} concludes that METI is unwise and potentially catastrophic and offers as equally plausible the reaction ``Let's snuff them before they snuff us''; \citet{baum2011would} catalog beneficial, neutral, and harmful contact scenarios, including a universalist preemptive strike if humanity looks rapidly expansive, and warn against assuming any one; \citet{haqqmisra2013transmitting} find METI to date less detectable than Earth's radio leakage and large-scale METI hard to assess because any watcher's response is unknown; \citet{haqqmisra2019policy} shows that the METI risk problem reduces to the halting problem, so that no information short of discovery resolves it. The game-theoretic treatments keep the same unknown: a prisoner's dilemma between listening and broadcasting in which listening is the only equilibrium and broadcasting carries a penalty for belligerent types~\citep{devladar2013game}, and two-by-two encounter games that recommend content-free, similarity-indicating messages~\citep{fischer2024identifying}. In each, the sign of a transmission is fixed by the watcher's motive, which is unknown, so the debate has no statement of when a transmission that reveals the sender changes a hostile's decision. Under remaining-capacity hostility the sign is fixed by reach and by what the receiver must believe to act: a hostile that can include the sender in an opening is tipped only when the sender closes the last assessed gap, which still contains the witnesses the hostile has never met, and the attempt is nearly sure to succeed; a hostile that cannot surely remove the sender is delayed, because the beacon has made the sender a witness it must pay for; and harm requires sure kill, whole-stake urgency, a receiver with no one to fear within the sender's reach, or an opening hidden from the sender (Appendix~\ref{app:beacon}). The witness reading needs no named addressee: a beacon demonstrates transmission without naming a recipient, and the receiver is never sure it has met everyone; what turns that demonstrated channel into warning capacity is Assumption~\ref{ass:observe-relay}.

\paragraph{The location problem.}
Across the models above, what a transmission hands a hostile is a location, and a location is a target. Liu's premise is that to be found is to be struck; \citet{jebari2018game} make knowledge of an intelligence's existence and location the information hazard; \citet{gertz2016reviewing}'s snuff-them reaction is the same premise, and \citet{devladar2013game}'s broadcasting penalty prices exposure without asking what a broadcast delivers; \citet{kuusela2024higher} give every civilization common-knowledge locations, so in their model the question does not arise; \citet{korhonen2013mad}'s optimism turns on whether an advanced civilization could detect us anyway. All of this treats two things as one: where a signal came from, and where the sender's forces are. At interstellar distances a received signal fixes a direction and a rough distance to a transmitter as it was when the signal left, long ago; the transmitter may have moved since, and it need not be anywhere near the capacity a hostile would have to find and remove. A beacon delivers a posterior, and a loose one. What it delivers with certainty is the sender's existence and proof that the sender can transmit to the receiver's region; since the signal names no recipient and the receiver is never sure the leftover is empty, that channel is warning capacity under Assumption~\ref{ass:observe-relay}. So the hazard the debate has argued over, being located, is the part of a transmission a hostile can least use, and the part the debate has not priced, the demonstrated channel, is the part that changes a hostile's decision, by making the sender a witness it must pay for (Appendix~\ref{app:beacon}). The two-player roster hid this: with one adversary, to be found is to be the whole remainder, and location is everything; with leftovers inside the object, to be found is to be added to every recipient's witnesses, and what matters is whether the sender's forces can be removed with near certainty from a transmitter's apparent position, which over interstellar distances they usually cannot.

\paragraph{Occupancy.}
Colonization and grabby-alien models take expanding occupancy of volume as the object whose presence would fill the sky: the Great Silence as a burden on models in which space-faring species should already pervade the Galaxy~\citep{brin1983silence}, cheap intergalactic colonization by replicating probes~\citep{armstrong2013eternity}, loud civilizations that occupy volume and prevent later advanced life inside it~\citep{hanson2021loud}, and self-replicating deadly probes, which \citet{sandberg2013hunters} conclude are not the main cause of the Fermi paradox because attacks are local. Occupancy of volume is a different object from leftover uncontrolled capacity after a strike that reveals the attacker, and hostility toward volume leaves what a leftover observer of that strike does outside the model.

\section{Formal statements}
\label{app:statements}

\paragraph{Notation.}
Capital letters are lump payoff components; lowercase letters are counts, indices, and per-unit rates. Physical outcomes live in a set $\mathcal{X}$; actor $i$ has utility $U_i:\mathcal X\to\mathbb R$, and remaining inflictable utility loss is $R_i(x)$, abbreviated $R$. $[i]$ is the set of actors $i$ does not count as external. At a first type-revealing hostile event the leftover independents form a finite set $N$, the opening removes $K\subseteq N$, and the witnesses are $W=N\setminus K$. Actor $i$'s information is $\mathcal I_i$; the part of $N$ that $i$ has not found is $\mathcal U_i$; a hat marks an assessment given $\mathcal I_i$. Timing benefit $B(K)$, disclosure loss $D(W)$, and campaign cost $C(K)$ are set functions; $H_i$ is what $i$ can lose. In the constant-rate regime those functions are bounded by rates $b$ and $d$ on an additive measure $\mu$, with $m=\mu(K)$. Survivor-threat loss to a leftover from a revealed hostile policy is $\Theta$. A detection event is $\mathcal S$ with detectability $\omega(X)=\Pr(\mathcal S\mid X)$; $S_0$ is the pre-detection wait-versus-attack margin and $u$ a target-specific extra. Sender remaining capacity against receiver $i$ is $m_{s,i}$, and the sender's payoff horizon is $T$.

\begin{definition}[Class treated as one actor]
\label{def:effective-coalition}
Write $i\sim j$ when neither treats the other's remaining capacity as externally uncontrolled. When that mutual non-targeting is assumed, we treat $i$ and $j$ as one actor. Write $[i]$ for the set of actors $i$ does not count as external. Remaining inflictable loss $R_i$ is remaining capacity outside that set.
\end{definition}

\begin{remark}[Leftover cases]
\label{thm:no-external-follower}
Let $\Theta$ be the continuation loss to a leftover independent from a revealed hostile policy. At $i$'s first type-revealing hostile act, an actor $j$ with remaining capacity not already in $[i]$ falls under one of four cases: $i$'s opening includes $j$; the act is unseen by $j$, who neither observes it nor is told of it by one who did, so $j$ is no responder to this strike; $j$ survives as a payoff-relevant responder, and $\Theta$ enters its continuation; or $j$ enters $[i]$. No other continuation payoff for $j$ is available. An unseen act removes $j$'s capacity to respond without destroying $j$, placing $j$ in the removed set $K$ rather than among the witnesses, and it is unseen by $j$ only if no witness can relay it to $j$.
\end{remark}

\begin{assumption}[Open roster]
\label{ass:open-roster}
The leftover set $N$ is finite, but $i$ never knows it to be complete. Under $\mathcal I_i$, $N$ splits into a catalogued part, the actors $i$ has found, and an uncatalogued part $\mathcal U_i$, with $\Pr(\mathcal U_i\neq\emptyset\mid\mathcal I_i)>0$. An opening can remove only catalogued actors: $K$ is a catalogued set, and every member of $\mathcal U_i$ survives every opening. Assessed quantities are expectations given $\mathcal I_i$; in particular $\hat D_i(K)=\E[D(N\setminus K)\mid\mathcal I_i]$.
\end{assumption}

\begin{corollary}[Assessed coverage]
\label{cor:assessed-coverage}
Let $B$ and $C$ be $i$'s assessments and let $\hat D_i(K)=\E[D(N\setminus K)\mid\mathcal I_i]$. If a first attack with catalogued opening $K$ is rational given $\mathcal I_i$, then
\begin{equation}
\label{eq:assessed-coverage}
B(K)\ \ge\ \hat D_i(K)\ \ge\ \E[D(\mathcal U_i)\mid\mathcal I_i],
\end{equation}
and the right-hand term does not depend on $K$.
\end{corollary}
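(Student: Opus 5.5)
The first inequality comes from running Theorem~\ref{thm:opening-coverage} on assessed quantities. The second comes from monotonicity of $D$ together with the open roster, Assumption~\ref{ass:open-roster}.

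For the first inequality, I read rationality given $\mathcal I_i$ as nonnegative assessed attack advantage. Conditioning the split \eqref{eq:advantage-split} on $\mathcal I_i$, with $B$ and $C$ already $i$'s assessments, gives $\E[A(K)\mid\mathcal I_i]=B(K)-\hat D_i(K)-C(K)\ge 0$. Campaign cost is nonnegative, so $B(K)\ge\hat D_i(K)+C(K)\ge\hat D_i(K)$. This is the proof of Theorem~\ref{thm:opening-coverage} with $D(N\setminus K)$ replaced by its conditional expectation. Linearity of expectation is all that is needed, because $K$ is catalogued and hence measurable with respect to $\mathcal I_i$. Only the witness set $N\setminus K$ is random, through the unknown composition of $N$.

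For the second inequality, I work pointwise on each realization of $N$ consistent with $\mathcal I_i$. By Assumption~\ref{ass:open-roster}, $K$ is catalogued, so every member of $\mathcal U_i$ survives the opening, giving $\mathcal U_i\subseteq N\setminus K$. The text states that $D$ is nondecreasing under inclusion, so $D(\mathcal U_i)\le D(N\setminus K)$ holds realization by realization. Taking conditional expectations preserves the inequality, giving $\E[D(\mathcal U_i)\mid\mathcal I_i]\le\hat D_i(K)$. Integrability is not a concern, since $D$ is bounded by the cap $H_i$. Independence of $K$ is then immediate: $\mathcal U_i$ is determined by what $i$ has found and by the true $N$, and no choice of opening affects either. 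So the right-hand term is a function of $\mathcal I_i$ alone.

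The step that needs the most care is the monotonicity step. The issue is that the uncatalogued set can be empty with positive probability. I would handle this by setting $D(\emptyset)=0$, which is consistent with monotonicity and nonnegativity of disclosure loss. The inclusion then holds trivially on that event, and the bound remains valid though possibly weaker.

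A second point to make explicit is that the relevant $\mathcal U_i$ is the uncatalogued part among actors that actually learn of the strike. Actors that never observe the strike are already counted in $K$ under Remark~\ref{thm:no-external-follower}. Since an opening can only conceal from actors $i$ can target, I would note that concealment toward unfound actors is not an available element of $K$, so those actors stay among the witnesses. This is the one place where the assumption that $K$ is catalogued is used substantively rather than by definition.
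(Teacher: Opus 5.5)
Your proof is correct and follows the paper's argument. Nonnegative assessed advantage $\E[A(K)\mid\mathcal I_i]\ge 0$ together with $C(K)\ge 0$ gives the first inequality, and $\mathcal U_i\subseteq N\setminus K$ from Assumption~\ref{ass:open-roster}, with monotonicity of $D$ applied pointwise and then conditional expectation, gives the second with no $K$ on the right. Your last two paragraphs are not needed: monotonicity already gives $D(\emptyset)\le D(N\setminus K)$ for any value of $D(\emptyset)$, and the assumption places the whole uncatalogued set among the survivors, so restricting $\mathcal U_i$ to actors that learn of the strike would only weaken the bound you have already proved.
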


\begin{corollary}[Capable witness]
\label{cor:capable-witness}
Suppose $i$ assigns probability $\pi_w\in(0,1]$ that a leftover $w\in N$ can bring about $i$'s destruction, by striking $i$ itself or by warning an actor that can, and let $q_w$ be the increase, caused by $w$ learning of a type-revealing strike, in the probability that $i$ suffers that loss within its horizon given that $w$ can, so that $D(\{w\})\ge\pi_wq_wH_i$. Let $B(K)\le h(K)H_i+B_{\mathrm{cost}}(K)$ as in Section~\ref{sec:open}. If a first attack with opening $K\not\ni w$ is rational, then~\eqref{eq:capable-witness} holds.
\end{corollary}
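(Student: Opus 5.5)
The argument is a chain of three inequalities, ending in a rearrangement. It rests on Theorem~\ref{thm:opening-coverage} together with two facts: $D$ is monotone under inclusion, and the timing benefit splits into a hazard part and a cost part.

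\emph{Step 1: coverage.} Since the first attack with opening $K$ is rational, Theorem~\ref{thm:opening-coverage} gives $B(K)\ge D(N\setminus K)$. Recall how that theorem is obtained from \eqref{eq:advantage-split}: rationality means $A(K)\ge 0$, and since $C(K)\ge 0$ this forces $B(K)\ge D(N\setminus K)$. I will simply cite the theorem.

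\emph{Step 2: isolate $w$.} Because $w\notin K$ and $w\in N$, we have $\{w\}\subseteq N\setminus K$. Monotonicity of $D$ under inclusion then gives $D(N\setminus K)\ge D(\{w\})$. By the hypothesis of the corollary, $D(\{w\})\ge \pi_w q_w H_i$.

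This hypothesis is where the lump enters. It says the disclosure loss attributable to $w$ is at least the stake $H_i$ times the probability that $w$ is capable ($\pi_w$) times the increase in the probability of destruction caused by $w$ learning of the strike ($q_w$). If one wanted to justify it rather than assume it, one would condition on $w$ being capable. The loss from $w$ learning of the strike is then the increment in the expected loss of $H_i$. The cap on $R_i$ at $H_i$ is what makes this a lump rather than a rate.

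\emph{Step 3: split the benefit.} Chaining Steps 1 and 2 gives $B(K)\ge \pi_w q_w H_i$. By the assumed decomposition, $B(K)\le h(K)H_i+B_{\mathrm{cost}}(K)$. Combining the two yields $h(K)H_i+B_{\mathrm{cost}}(K)\ge \pi_w q_w H_i$. Subtracting $h(K)H_i$ from both sides gives \eqref{eq:capable-witness}.

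\emph{Main obstacle.} The only delicate point is bookkeeping: ensuring the hazard part of $B$ is bounded by $h(K)H_i$ in the same units as $D$. The paper has already done this by stipulating $B_{\mathrm{haz}}(K)\le h(K)H_i$ in Section~\ref{sec:open}, so the proof is routine. I would also remark that the inequality is vacuous when $\pi_w q_w\le h(K)$. That case does not need separate treatment, since $B_{\mathrm{cost}}$ may be negative and the derivation covers it.
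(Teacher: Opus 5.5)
Your proof is correct and is essentially the paper's. The paper applies the second clause of Theorem~\ref{thm:opening-coverage} with $W'=\{w\}$, which is exactly your monotonicity step, to chain $\pi_wq_wH_i\le D(\{w\})\le B(K)\le h(K)H_i+B_{\mathrm{cost}}(K)$, and then rearranges.
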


\begin{corollary}[The uncatalogued witness]
\label{cor:uncatalogued-witness}
Suppose $i$ assigns probability $\pi_u>0$ that some actor it has not found, $u\in\mathcal U_i$, can bring about $i$'s destruction, by striking $i$ or by warning an actor that can, and learns of the strike; let $q_u$ be the increase that learning causes in the probability that $i$ suffers that loss within its horizon given that $u$ can. If a first attack with opening $K$ is rational given $\mathcal I_i$, then for every catalogued $K$, including the one that removes every actor $i$ has found,
\begin{equation}
\label{eq:uncatalogued-witness}
\hat D_i(K)\ \ge\ \pi_uq_uH_i
\qquad\text{and}\qquad
B_{\mathrm{cost}}(K)\ \ge\ \bigl(\pi_uq_u-h(K)\bigr)H_i .
\end{equation}
\end{corollary}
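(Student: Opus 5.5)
The plan is to reduce Corollary~\ref{cor:uncatalogued-witness} to the two earlier results: Corollary~\ref{cor:assessed-coverage} handles the expectation over the unfound part, and the argument of Corollary~\ref{cor:capable-witness} converts a disclosure lump into a bound on $B_{\mathrm{cost}}$. Fix any catalogued $K$. By Assumption~\ref{ass:open-roster}, every member of $\mathcal U_i$ survives the opening, so $\mathcal U_i\subseteq N\setminus K$ on every realization. Since $D$ is nondecreasing under inclusion, $D(N\setminus K)\ge D(\mathcal U_i)$ pointwise. Taking conditional expectations given $\mathcal I_i$ recovers the second inequality of \eqref{eq:assessed-coverage}: $\hat D_i(K)\ge\E[D(\mathcal U_i)\mid\mathcal I_i]$. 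The right-hand side does not depend on $K$, so the whole argument holds uniformly over catalogued openings, including the maximal one that removes every actor $i$ has found.

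The second step is to lower-bound $\E[D(\mathcal U_i)\mid\mathcal I_i]$ by $\pi_uq_uH_i$. I would condition on the event $E$ that some $u\in\mathcal U_i$ can bring about $i$'s destruction and learns of the strike, which has probability $\pi_u$ given $\mathcal I_i$. On $E$, the definition of $q_u$ (as for $q_w$ in Corollary~\ref{cor:capable-witness}) says that the disclosure raises the probability of the loss $H_i$ within the horizon by $q_u$. So the disclosure loss attributable to that witness is at least $q_uH_i$, and monotonicity gives $D(\mathcal U_i)\ge D(\{u\})\ge q_uH_i$ on $E$. Off $E$, $D\ge 0$, since $D$ is nondecreasing and $D(\emptyset)$ is the baseline with no witnesses, taken as zero. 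Averaging gives $\E[D(\mathcal U_i)\mid\mathcal I_i]\ge\pi_uq_uH_i$, which is the first inequality of \eqref{eq:uncatalogued-witness}.

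The main obstacle is this step, because $u$ is a random member of a random set and $D$ is not additive. The per-witness bound $D(\{w\})\ge\pi_wq_wH_i$ of Corollary~\ref{cor:capable-witness} is stated for a fixed $w$. I would handle this by defining $D(\mathcal U_i)$ realization-wise and using only monotonicity, as above. That way no additivity is needed, and the cap at $H_i$ is harmless because the bound never exceeds $H_i$.

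Finally, rationality gives $B(K)\ge\hat D_i(K)$ by Corollary~\ref{cor:assessed-coverage}, which rests on Theorem~\ref{thm:opening-coverage} with the assessed disclosure loss. Combining this with the split $B(K)\le h(K)H_i+B_{\mathrm{cost}}(K)$ from Section~\ref{sec:open} gives $h(K)H_i+B_{\mathrm{cost}}(K)\ge\pi_uq_uH_i$. Rearranging yields $B_{\mathrm{cost}}(K)\ge(\pi_uq_u-h(K))H_i$, exactly as in the proof of Corollary~\ref{cor:capable-witness}.
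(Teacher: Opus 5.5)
Your proposal is correct and follows the paper's proof. On the probability-$\pi_u$ event, Assumption~\ref{ass:open-roster} and monotonicity give $D(N\setminus K)\ge D(\{u\})\ge q_uH_i$; taking expectations gives $\hat D_i(K)\ge\pi_uq_uH_i$; and Corollary~\ref{cor:assessed-coverage} with the split $B(K)\le h(K)H_i+B_{\mathrm{cost}}(K)$ gives the second inequality. Your detour through $D(\mathcal U_i)$ and your explicit use of $D\ge 0$ off the event are harmless refinements, and the paper relies on that nonnegativity implicitly when it takes expectations.
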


\paragraph{Survival payoff.}
Index the hunters outside $[i]$ by $j$, found or not. Over $i$'s horizon, hunter $j$ finds a hidden $i$ with probability $\phi_j$; hunter $j$ learns of a strike by $i$ with probability $v_j$; a hunter that has located $i$ strikes and destroys it with probability $r_j\in[0,1]$, raised to $q_j\ge r_j$ once $i$ is a verified hostile, so that a strike that fails to destroy is counted in $1-r_j$. Events are independent across hunters. At its event $i$ either hides or opens on a catalogued set $K$. Then
\begin{align}
\label{eq:survival-probabilities}
P_{\mathrm{hide}}&=\prod_{j}\bigl(1-r_j\phi_j\bigr),
\\
P_{\mathrm{strike}}(K)&=\prod_{j\notin K}\bigl[1-v_jq_j-(1-v_j)\,r_j\phi_j\bigr],
\end{align}
and $H_i\bigl(P_{\mathrm{strike}}(K)-P_{\mathrm{hide}}\bigr)$ is the attack advantage $A(K)$ with zero campaign cost.

\begin{proposition}[Strike or hide]
\label{thm:strike-or-hide}
$P_{\mathrm{strike}}(K)>P_{\mathrm{hide}}$ if and only if
\begin{equation}
\label{eq:strike-or-hide}
\prod_{j\notin K}\Bigl(1-\frac{v_j\,(q_j-r_j\phi_j)}{1-r_j\phi_j}\Bigr)\ >\ \prod_{j\in K}\bigl(1-r_j\phi_j\bigr).
\end{equation}
Each factor on the left is at most one when $q_j\ge r_j$, so raising every $r_j$ and $q_j$ together lowers both sides. If every hunter in a set $S$ strikes whatever it locates, $r_j=q_j=1$ for $j\in S$ and $0$ otherwise, then~\eqref{eq:strike-or-hide} reads
\begin{equation}
\label{eq:strike-or-hide-strikers}
\prod_{j\in S\setminus K}(1-v_j)\ >\ \prod_{j\in S\cap K}(1-\phi_j).
\end{equation}
With one target and no one else, $S=K$, the left side is the empty product and the strike wins whenever $\phi_j>0$. To first order in small rates, \eqref{eq:strike-or-hide} is $\sum_{j\in K}r_j\phi_j>\sum_{j\notin K}v_jq_j$; with common $r$ and $q\ge r$ this is $\sum_K\phi_j>(q/r)\sum_{j\notin K}v_j$, and $r$ has dropped out of the sign.
\end{proposition}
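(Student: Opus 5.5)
The plan is to compare the two products in \eqref{eq:survival-probabilities} factor by factor after splitting the index set into $K$ and its complement. Hunters in $K$ contribute a factor to $P_{\mathrm{hide}}$ but none to $P_{\mathrm{strike}}(K)$, since the opening removes them. So, assuming every $r_j\phi_j<1$ (the degenerate case, where some hunter surely finds and destroys a hider, makes $P_{\mathrm{hide}}=0$ and is handled separately), I divide both sides of $P_{\mathrm{strike}}(K)>P_{\mathrm{hide}}$ by the positive quantity $\prod_{j\notin K}(1-r_j\phi_j)$. This leaves $\prod_{j\in K}(1-r_j\phi_j)$ on the right. On the left, for each $j\notin K$, I rewrite the factor as
\[
\frac{1-v_jq_j-(1-v_j)r_j\phi_j}{1-r_j\phi_j}=\frac{(1-r_j\phi_j)-v_j(q_j-r_j\phi_j)}{1-r_j\phi_j}=1-\frac{v_j(q_j-r_j\phi_j)}{1-r_j\phi_j},
\]
which gives exactly \eqref{eq:strike-or-hide}. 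The equivalence is an iff because I only divided by a positive number.

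Next come the monotonicity remarks. Since $q_j\ge r_j\ge r_j\phi_j$, each numerator $v_j(q_j-r_j\phi_j)$ is nonnegative, so each left factor is at most one. To see that "raising every $r_j$ and $q_j$ together lowers both sides," I take the right side to be plainly decreasing in $r_j$. For the left side I read the claim as: the subtracted fraction is nondecreasing when $q_j$ rises with $r_j$. The derivative in $r_j$ of $(q-r\phi)/(1-r\phi)$ is $\phi(q-1)/(1-r\phi)^2\le0$, so I need to state the sense carefully, namely a joint increase with $q_j$ moving at least as fast, or simply $q_j$ raised. This is the step where I expect to have to choose a precise reading, and I would present the claim as "lowers both sides for $q_j$ raised alongside $r_j$." In any case the paper's substantive use is the first-order statement.

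For the strikers specialization I substitute $r_j=q_j=1$ for $j\in S$ and $0$ otherwise. A left factor becomes $1-v_j(1-\phi_j)/(1-\phi_j)=1-v_j$ for $j\in S\setminus K$ (with $\phi_j<1$), and $1$ off $S$. A right factor becomes $1-\phi_j$ on $S\cap K$ and $1$ off $S$, which yields \eqref{eq:strike-or-hide-strikers}. For the lone target, $S=K$ makes the left side $1$ and the right side $1-\phi_j<1$ whenever $\phi_j>0$.

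For the first-order claim I expand both products to linear order in the small quantities $v_j$, $r_j$, $\phi_j$. The left side is $1-\sum_{j\notin K}v_jq_j+O(2)$, because $v_jr_j\phi_j$ and the denominator correction are higher order. The right side is $1-\sum_{j\in K}r_j\phi_j+O(2)$. Comparing the two gives $\sum_K r_j\phi_j>\sum_{\notin K}v_jq_j$. With common $r>0$ and $q$, I divide by $r$ to get $\sum_K\phi_j>(q/r)\sum_{j\notin K}v_j$. Only the ratio $q/r$ survives, so the overall lethality scale drops out of the sign.
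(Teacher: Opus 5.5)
Your proposal is correct and is essentially the paper's proof. You divide factor by factor, rewrite each $j\notin K$ factor as $1-v_j(q_j-r_j\phi_j)/(1-r_j\phi_j)$, and set $r_j=q_j=1$ on $S$; the paper instead substitutes into the original products, getting $(1-v_j)(1-\phi_j)$ against $1-\phi_j$, which is the same cancellation. Your caution about the monotonicity remark matches the paper, whose proof claims only that the left factors are nonincreasing in $q_j$ and the right factors nonincreasing in $r_j$, and your first-order expansion supplies a step the paper states without proof.
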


\begin{theorem}[The Dark Forest profile is not an equilibrium]
\label{thm:not-equilibrium}
Let $n$ hunters each choose, at their own event and from their own catalogue, to hide or to strike what they have located, with survival probability as payoff and the independent find probabilities $\phi_j$ and visibilities $v_j$ above. Write $\mathcal U_i$ for the hunters $i$ has not found (Assumption~\ref{ass:open-roster}). Suppose strikes are visible:
\begin{equation}
\label{eq:visible-strikes}
\prod_{j\in\mathcal U_i}(1-v_j)\ \le\ \prod_{j\ne i}(1-\phi_j)\qquad\text{for every } i .
\end{equation}
Then:
\begin{enumerate}
\item[(a)] In the Dark Forest profile, in which every hunter strikes whatever it locates, every hunter weakly prefers to hide, strictly when~\eqref{eq:visible-strikes} is strict. Unless~\eqref{eq:visible-strikes} holds with equality for every hunter, the Dark Forest profile is not a Nash equilibrium.
\item[(b)] The profile in which every hunter hides is a Nash equilibrium.
\item[(c)] In any profile with striker set $S$, hunter $i$ strikes a catalogued $K$ as a best response only if $\prod_{j\in S\setminus K}(1-v_j)>\prod_{j\in S\cap K}(1-\phi_j)$. If the hunters $i$ has not found are strikers, $\mathcal U_i\subseteq S$, then~\eqref{eq:visible-strikes} makes this impossible for every $K$.
\end{enumerate}
\end{theorem}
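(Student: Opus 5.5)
The plan is to derive all three parts from Proposition~\ref{thm:strike-or-hide}, specialized to the striker form~\eqref{eq:strike-or-hide-strikers}. All three reduce to a single chain of inequalities. That chain uses only two facts: every factor $(1-v_j)$ and $(1-\phi_j)$ lies in $[0,1]$, and a catalogued opening $K$ is disjoint from $\mathcal U_i$ by Assumption~\ref{ass:open-roster}.

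I will prove (c) first, since (a) is a special case of it. Fix a profile with striker set $S$ and a hunter $i$ with catalogued opening $K$. By~\eqref{eq:strike-or-hide-strikers}, striking beats hiding exactly when $\prod_{j\in S\setminus K}(1-v_j)>\prod_{j\in S\cap K}(1-\phi_j)$, so striking is a best response only if this holds. (If it holds with equality, $i$ is indifferent and striking is at most a weak best response; the wording "strikes as a best response" should be read as strictly, and I will say so.) Now suppose $\mathcal U_i\subseteq S$. Because $K\cap\mathcal U_i=\emptyset$, we have $\mathcal U_i\subseteq S\setminus K$. Dropping factors that are at most one then gives
\[
\prod_{j\in S\setminus K}(1-v_j)\le\prod_{j\in\mathcal U_i}(1-v_j)\le\prod_{j\ne i}(1-\phi_j)\le\prod_{j\in S\cap K}(1-\phi_j).
\]
The middle step is~\eqref{eq:visible-strikes}. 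The last step holds because $S\cap K\subseteq\{j\ne i\}$ and the omitted factors are at most one. So the strict inequality required for a strike fails for every $K$.

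Part (a) is the case $S=\{j\ne i\}$, so that $\mathcal U_i\subseteq S$ automatically. The same chain shows $P_{\mathrm{strike}}(K)\le P_{\mathrm{hide}}$ for every $K$ that $i$ has located; that is, $i$ weakly prefers to hide. The preference is strict when~\eqref{eq:visible-strikes} is strict, because a strict middle link makes the whole chain strict. For the failure of Nash equilibrium, pick a hunter $i$ for which~\eqref{eq:visible-strikes} is strict. In the Dark Forest profile $i$ strikes its located set $K$, and unilaterally switching to hide strictly raises its survival probability, so the profile is not an equilibrium.

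The step I expect to need care is this strictness step, in the degenerate case where $i$ has located nothing, $K=\emptyset$. There "strike what it locates" may coincide with hiding. The comparison is still strict in that case, because the empty-product right side equals $1$ while the left side is below one. If the paper instead treats an empty opening as hiding, I would restrict to hunters with a nonempty catalogue and note the convention.

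For (b), take $S=\emptyset$, with $r_j=q_j=0$ for all $j$ as in Proposition~\ref{thm:strike-or-hide}. Then $P_{\mathrm{hide}}=1=P_{\mathrm{strike}}(K)$ for every $K$, so hiding attains the maximal survival probability and no deviation helps. Every hunter is therefore at a (weak) best response, which makes all-hide a Nash equilibrium. This needs no use of~\eqref{eq:visible-strikes}. The one interpretive point is that, within the formal model, hunters who play hide never strike even a verified hostile, so $q_j=0$ off $S$. I will take this directly from the striker specialization of Proposition~\ref{thm:strike-or-hide} rather than re-argue it.
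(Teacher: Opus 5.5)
Your proposal is correct and follows the paper's proof. It uses the same chain $\prod_{j\in S\setminus K}(1-v_j)\le\prod_{j\in\mathcal U_i}(1-v_j)\le\prod_{j\ne i}(1-\phi_j)\le\prod_{j\in S\cap K}(1-\phi_j)$, resting on $K\cap\mathcal U_i=\emptyset$, and gets (b) from $r_j=q_j=0$ making both survival probabilities one; the only differences are that you prove (c) first and read (a) off it as the case $S=\{j\ne i\}$, and that you state explicitly the strict reading of ``best response'' in (c), which the paper uses without comment.
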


\begin{remark}[What this says about the argument]
\label{rem:self-defeating}
The Dark Forest's premises are the hypotheses of Theorem~\ref{thm:not-equilibrium}. A populated forest makes $\mathcal U_i$ nonempty; chain of suspicion makes the unfound hunters strikers of anything they verify, $\mathcal U_i\subseteq S$; and hide together with a loud strike says that a hider is less likely to be found by everyone than a strike is to be seen by the unfound, $\prod_{j\ne i}(1-\phi_j)\ge\prod_{j\in\mathcal U_i}(1-v_j)$, with neither product required to be near any particular value. Under those premises~\eqref{eq:visible-strikes} holds, and by (a) each hunter, taking the others at the argument's word, prefers to hide. Part (a) uses no value of $r_j$ or $q_j$ and no count of hunters, only the ordinal claim that a strike is more visible than a hider. The hazard a hunter is told to preempt is a strike that would be the target's own first strike, to which (c) applies. Which no-shoot profile the hunters settle on, and whether hunters that learn rather than reason arrive at any of them, the theorem does not say.
\end{remark}

\begin{assumption}[Observe and relay]
\label{ass:observe-relay}
A sender that can transmit to receiver $i$'s region can observe a type-revealing opening by $i$ that is not concealed from it, and can then warn actors within its transmission reach. Receiver $i$ assigns probability $\pi_s$ that this, or the sender's own retaliation, brings about $i$'s destruction; $\pi_s>0$ requires that some actor $i$ fears lie within the sender's reach.
\end{assumption}

\begin{proposition}[An inclusion attempt needs near-certain success]
\label{thm:attempt-threshold}
Let $i$ apply the Dark Forest premises of Proposition~\ref{thm:believer-ratio}, so a witness that has learned of its opening acts or is acted on with probability at least $q$, and let the timing benefit the sender adds satisfy $B_i(\mathcal K_i\cup\{s\})-B_i(\mathcal K_i)\le h_sH_i+B_{\mathrm{cost}}(s)$, with $h_s$ the probability that $s$ strikes $i$ during the wait and $B_{\mathrm{cost}}(s)$ the growth in the cost of removing $s$ later. Write $\beta_s=B_{\mathrm{cost}}(s)/H_i$. If $i$ can include $s$, then~\eqref{eq:attempt-threshold} holds. Under hide, $h_s\le f$, so as $f\to 0$ and $\beta_s\to 0$ the right side of~\eqref{eq:attempt-threshold} tends to one.
\end{proposition}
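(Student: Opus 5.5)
The plan is to read \eqref{eq:attempt-threshold} as a lower bound on the probability $p_s$ that an attempt to include the sender $s$ in the opening actually removes $s$. I expect its form to be $p_s\ge 1-(h_s+\beta_s)/q$, or an equivalent rearrangement. I would prove it as a marginal version of the capable-witness argument.

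First, fix the catalogued opening $\mathcal K_i$ that $i$ would make without the sender. Compare two choices: strike $\mathcal K_i$ and leave $s$ out, or strike $\mathcal K_i$ and also attempt $s$. By the Opening coverage theorem and the split \eqref{eq:advantage-split}, attempting $s$ is rational only if the expected marginal timing benefit is at least the expected marginal disclosure loss. The hypothesis caps the marginal benefit by $h_sH_i+B_{\mathrm{cost}}(s)=(h_s+\beta_s)H_i$. This benefit is only collected when the attempt succeeds, so bounding it by the full amount is conservative.

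Second, bound the disclosure side from below. With probability $1-p_s$ the attempt fails, and then $s$ survives as a witness. It has certainly learned of the strike, since it was the target, so the unseen case of Remark~\ref{thm:no-external-follower} does not apply. By Assumption~\ref{ass:observe-relay} it can warn actors $i$ fears, or it can retaliate. Under the Dark Forest premises of Proposition~\ref{thm:believer-ratio}, a witness that has learned of the opening acts, or is acted on, with probability at least $q$. Applying the lump bound of Corollary~\ref{cor:capable-witness} with $\pi=1$ conditional on failure gives an expected loss of at least $(1-p_s)qH_i$.

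Third, combine the two sides. Inclusion is worth attempting only if $(1-p_s)q\le h_s+\beta_s$, which gives the threshold. For the limit, use the hide premise: under hide, $s$ is unremoved capacity that finds and strikes the hidden $i$ during the wait with probability at most $f$, so $h_s\le f$. Then as $f\to0$ and $\beta_s\to0$ the right side $1-(h_s+\beta_s)/q$ tends to one, since $q>0$ is fixed.

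The main obstacle is the second step: attributing the whole lump $qH_i$ to the failed attempt without double counting. The witnesses in $\mathcal U_i$ already impose a floor through Corollary~\ref{cor:uncatalogued-witness}, and the loss function $D$ is capped and non-additive, so an added witness might add less than $qH_i$ when that floor is already near the cap. I would first try to state the comparison as the increase in the conditional probability of $i$'s destruction. Here I would use the definition of $q$ in the premises as an \emph{increase} caused by learning, as in Corollary~\ref{cor:capable-witness}, so that the failed attempt contributes its own increment of at least $q$ relative to not attempting. If the cap interferes, I would fall back to stating the threshold relative to the residual survival probability.
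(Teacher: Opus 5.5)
\textbf{There is a genuine gap.} Your overall structure is the paper's: the sender's added timing benefit weighed against the lump of a sender that survives the attempt, informed. But the inequality you arrive at is not \eqref{eq:attempt-threshold}, which reads $p_s\ge\pi_sq/(\pi_sq+h_s+\beta_s)$. Your guessed form $p_s\ge 1-(h_s+\beta_s)/q$ differs from it in two places, and each comes from a step in your argument.

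The first is setting $\pi=1$ when you apply Corollary~\ref{cor:capable-witness} to the failed sender. Nothing licenses that. The populated-forest premise gives probability one to a capable actor among the unfound ($\pi_u=1$), not to the sender. Assumption~\ref{ass:observe-relay} makes the chance that the sender's relay or retaliation brings about $i$'s destruction an explicit credence $\pi_s$, positive only if some actor $i$ fears is within reach. The premise's $q$ is conditional on capability, as in the first display of Proposition~\ref{thm:believer-ratio}. The failure lump is therefore $\delta_s\ge\pi_sq_sH_i\ge\pi_sqH_i$. Overstating it is not harmless. Take $\pi_s=0.1$, $q=1$, $h_s+\beta_s=0.01$, with both bounds tight. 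A receiver with $p_s=0.95$ can include $s$: the expected added benefit is $0.0095\,H_i$ against an expected failure loss of $0.005\,H_i$. Yet it violates your bound $p_s\ge 0.99$.

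The second is dropping the factor $p_s$ on the benefit side as ``conservative.'' The resulting condition is still necessary, but it is weaker than the one asserted. You also do not need a separate rationality comparison: \emph{can include} is the hypothesis. Appendix~\ref{app:beacon} defines it as $p_s\bigl[B_i(\mathcal K_i\cup\{s\})-B_i(\mathcal K_i)\bigr]\ge(1-p_s)\delta_s$, with the benefit already weighted by success. Combine that with the stated bound on the added benefit and with $\delta_s\ge\pi_sqH_i$. This gives $p_s(h_s+\beta_s)\ge(1-p_s)\pi_sq$, that is, $p_s(\pi_sq+h_s+\beta_s)\ge\pi_sq$, which is \eqref{eq:attempt-threshold}. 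Without the $p_s$, even with $\pi_s$ restored, you get only $p_s\ge 1-(h_s+\beta_s)/(\pi_sq)$. That falls below \eqref{eq:attempt-threshold} by $(h_s+\beta_s)^2/\bigl(\pi_sq(\pi_sq+h_s+\beta_s)\bigr)$, and it is vacuous once $h_s+\beta_s\ge\pi_sq$.

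Your limit argument goes through with either form, since $h_s\le f$ and $\pi_sq>0$ is fixed. Your double-counting worry is resolved as you suggested: $\delta_s$ is the lump of a surviving, informed sender, with $q_s$ an increase as in Corollary~\ref{cor:capable-witness}. So no residual-survival fallback is needed.
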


\begin{corollary}[Decoupled sender]
\label{cor:decoupled-sender}
Suppose no hostile receiver can remove the sender's forces with probability satisfying~\eqref{eq:attempt-threshold}, which the sender can arrange by transmitting from a position decoupled from its forces and keeping them dispersed, and suppose every hostile receiver has some actor it fears within the sender's reach, so $\pi_s>0$. Then the beacon's strategic increment is negative, $\Delta U_s<-c_{\mathrm{tx}}$, only if some in-reach hostile opens inside $T$ despite the sender's witness lump, through whole-stake urgency or by concealing its opening from the sender, and then finds the sender's remaining capacity against it worth a continuation operation. Otherwise $\Delta V_s\ge 0$, and $\Delta V_s>0$ whenever the beacon delays a hostile that would have opened inside $T$ or draws an exchange.
\end{corollary}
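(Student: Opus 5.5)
The plan is to write the beacon's strategic increment as $\Delta U_s=\Delta V_s-c_{\mathrm{tx}}$ and to decompose $\Delta V_s$ receiver by receiver. For each hostile receiver $i$ within reach, the plan compares $i$'s opening decision with the beacon and without it. It then shows that the comparison can only delay $i$'s opening or leave it unchanged, except in the two escape cases the statement names. Independence across receivers, as in the survival payoff, should let the per-receiver comparisons combine into a sign for $\Delta V_s$. The case split follows Remark~\ref{thm:no-external-follower}: after the beacon, the sender either is in $i$'s opening, is unseen (placed in $K$ by concealment), or is a witness in $N\setminus K$. The fourth case, entering $[i]$, is not hostile and contributes no loss to $s$.

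First I exclude inclusion by destruction. The hypothesis says no hostile receiver removes the sender's forces with probability meeting~\eqref{eq:attempt-threshold}. By Proposition~\ref{thm:attempt-threshold}, a rational $i$ applying the Dark Forest premises therefore does not attempt to include $s$. This is exactly where decoupling the transmitter's apparent position from dispersed forces is used: the beacon's location posterior must not raise $i$'s removal probability above the threshold.

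Next I treat the witness case. With $s\notin K$ and $\pi_s>0$ by Assumption~\ref{ass:observe-relay}, the condition $D(\{s\})\ge\pi_s q H_i$ holds, and Corollary~\ref{cor:capable-witness} adds the lump $(\pi_s q-h_s)H_i$ to what $B_{\mathrm{cost}}$ must cover. The beacon leaves $B(K)$ unchanged for every catalogued $K$ not containing $s$. Because $D$ is nondecreasing under inclusion, $\hat D_i(K)$ weakly rises. Hence every opening that fails coverage without the beacon still fails it with the beacon, so Theorem~\ref{thm:opening-coverage} and Corollary~\ref{cor:assessed-coverage} say the beacon can only remove rational openings inside $T$, never add them. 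The remaining ways $i$ can still open are the ones named in the statement. The first is urgency large enough to cover the added lump, which is the whole-stake case as $q\to1$ and $h_s\to0$. The second is concealment that moves $s$ into $K$ as unseen. Even in these cases, the sender loses only if $i$ afterwards judges $m_{s,i}$ worth a continuation operation; otherwise the sender is untouched, and its continuation is at least its no-beacon value. Together these steps give $\Delta V_s\ge0$ outside the named cases, and therefore $\Delta U_s<-c_{\mathrm{tx}}$ only inside them. Strict positivity follows because a hostile that would have opened inside $T$ and is now deterred past $T$ contributes a strictly positive survival difference, as does any drawn exchange.

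The main obstacle is the claim that the beacon leaves $B$ unchanged and does not enlarge the catalogued options. A beacon also hands $i$ information, and that information could raise $h_s$ or $B_{\mathrm{cost}}(s)$, making $s$ look more urgent, or it could move other actors into $i$'s catalogue. My first attempt would be to route all such effects through the single quantity $B_i(\mathcal K_i\cup\{s\})-B_i(\mathcal K_i)$ bounded in Proposition~\ref{thm:attempt-threshold}. Any effect that does not go through removing $s$ then leaves the no-beacon comparison intact, and any urgency that does go through $s$ is, by definition, the whole-stake-urgency exception.
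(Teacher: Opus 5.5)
Your case analysis is essentially the proof of Theorem~\ref{thm:harm-requires} written out again. The paper proves the corollary in one line. It cites that theorem and observes that the threshold hypothesis empties class $O$: by Proposition~\ref{thm:attempt-threshold} no receiver can include $s$, which removes route (a). Then $\pi_s>0$ removes route (c), leaving urgency (b) and concealment (d).

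Re-deriving is fine, but the step that carries your version does not follow as written. You argue that, for a receiver that cannot include $s$, the beacon can only remove openings inside $T$, never add them, because ``$D$ is nondecreasing under inclusion, so $\hat D_i(K)$ weakly rises.'' Monotonicity gives only half of that comparison: $\E[D(\mathcal C\cup\{s\}\cup\mathcal U_i^{-s})\mid\mathcal I_i\vee\mathcal S]\ge\E[D(\mathcal C\cup\mathcal U_i^{-s})\mid\mathcal I_i\vee\mathcal S]$. The other half compares that post-beacon assessment with the pre-beacon $\hat D_i=\E[D(\mathcal C\cup\mathcal U_i)\mid\mathcal I_i]$, in which the sender was still part of the unmet sky. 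That is a claim about how the beacon moves $i$'s beliefs, not about $D$, and it can fail. The paper notes after Theorem~\ref{thm:tipping} that without extra structure a receiver that cannot include $s$ can be tipped when the sender turns out weaker than the unknown capacity it replaces in the receiver's count.

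The paper closes this with Assumption~\ref{ass:chosen-beacon}. That assumption is a standing hypothesis of Theorems~\ref{thm:free-beacon} and~\ref{thm:harm-requires}, and so of this corollary. It works through Proposition~\ref{thm:beacon-shift}, which gives $\hat A_i'\le\hat A_i$ and hence $\tau_i'\ge\tau_i$. Your ``main obstacle'' paragraph looks for the difficulty on the benefit side. Routing effects through $B_i(\mathcal K_i\cup\{s\})-B_i(\mathcal K_i)$ does not touch the disclosure side. Your claim that ``any effect that does not go through removing $s$ leaves the no-beacon comparison intact'' restates the assumption rather than deriving it. Cite the assumption there.

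Several smaller repairs are also needed.
\begin{itemize}
\item Per-receiver signs combine because the sender's accounting is additive, $\Delta V_s=G_s+\sum_i\Delta_i$ in Theorem~\ref{thm:free-beacon}, not because receivers are independent. Independence enters only in Corollary~\ref{cor:majority}. The sender's payoff is the loss accounting in $H_s$ and $\Theta_{s,i}$, not a survival probability.
\item Your with-and-without comparison of opening decisions does not reach hostiles already revealed when the beacon arrives. Their opening is not moved by the beacon, yet the beacon catalogues $s$ and can lead to its removal. You must place them, as the paper does, in (b) or (d) with the opening at $\tau_i$.
\item The witness bound from Corollary~\ref{cor:capable-witness} is $B_{\mathrm{cost}}(K)\ge(\pi_sq-h(K))H_i$, with $h(K)$ the hazard from the opening's own targets. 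The quantity $h_s$ belongs to the inclusion threshold~\eqref{eq:attempt-threshold}, not to the witness case. The bound is a lower bound on $\hat D_i$, not an increment added to the other witnesses' loss, since $D$ is capped and need not be additive.
\item Theorem~\ref{thm:free-beacon} makes a delay strictly valuable only when the delayed opening would have cost the sender something: $\Theta_{s,i}(\tau_i)>0$, or a strict drop in $\Theta_{s,i}$ for class $N$.
\end{itemize}
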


\begin{remark}[Deterrence by witnessing]
\label{rem:deterrence-witnessing}
Korhonen's deterrence needs the sender to be able to retaliate~\citep{korhonen2013mad}. The beacon adds the sender to the witness set of every hostile that receives it, and under Assumption~\ref{ass:observe-relay} that raises each such hostile's cost of opening by the lump $\pi_sqH_i$ of Corollary~\ref{cor:capable-witness}, whether or not the sender can strike. The beacon is self-verifying proof of the channel: a receiver holding the signal knows the sender can transmit to its region, and because the leftover is never known to be empty it cannot rule out a capable listener. It is Assumption~\ref{ass:observe-relay}, that the sender can observe the opening and relay it, that makes the channel warning capacity. A transmission that says only that the sender exists and can transmit proves the channel without advertising the strike capacity that would raise $\beta_s$. Whether that minimal content is the sender's best content is a signaling question this paper does not settle.
\end{remark}

\section{Constant-rate coverage}
\label{app:constant-rate}

The main text states the share bound as a corollary of Theorem~\ref{thm:opening-coverage}. This appendix records the statement and the believer's share form.

\begin{corollary}[Constant-rate coverage]
\label{cor:constant-rate}
Let $\mu$ be an additive measure on $N$, $\mu(A)=\sum_{j\in A}\mu_j$ with $\mu_j\ge 0$ and $\mu(N)=R$, and let $b\ge 0$ and $d>0$ satisfy $B(K)\le b\,\mu(K)$ and $D(W)\ge d\,\mu(W)$ for all $K,W\subseteq N$. Write $m=\mu(K)$. If $R>0$ and a first attack with opening $K$ is rational, then
\begin{equation}
\label{eq:coverage-share}
\frac{m}{R}\ge\frac{d}{d+b}.
\end{equation}
As $b/d\to 0$ the required share tends to one, so $\mu(N\setminus K)/R\to 0$.
\end{corollary}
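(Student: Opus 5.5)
The plan is to derive the share bound directly from Theorem~\ref{thm:opening-coverage}, sandwiching the two sides of \eqref{eq:coverage} between the constant-rate bounds. Suppose a first attack with opening $K$ is rational. Theorem~\ref{thm:opening-coverage} gives $B(K)\ge D(N\setminus K)$. The hypotheses give $B(K)\le b\,\mu(K)=bm$ and $D(N\setminus K)\ge d\,\mu(N\setminus K)$.

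Since $\mu$ is additive and $K$ and $N\setminus K$ partition $N$, we have $\mu(N\setminus K)=R-m$. Chaining the three inequalities gives
\[
bm\ \ge\ B(K)\ \ge\ D(N\setminus K)\ \ge\ d\,(R-m),
\]
so $(b+d)\,m\ge dR$. Because $d>0$ and $b\ge 0$, the factor $b+d$ is positive, and because $R>0$ we may divide by $(b+d)R$. This yields $m/R\ge d/(d+b)$, which is \eqref{eq:coverage-share}.

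For the limiting statement, rewrite the bound as $d/(d+b)=1/(1+b/d)$, which tends to one as $b/d\to 0$. Additivity again gives $\mu(N\setminus K)/R=1-m/R\le (b/d)/(1+b/d)$. The right side tends to zero, and the left side is nonnegative because each $\mu_j\ge 0$. Hence $\mu(N\setminus K)/R\to 0$.

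There is no real obstacle in this argument. The only points needing care are the following.
\begin{itemize}
\item The constant-rate bounds must be applied at the specific sets $K$ and $W=N\setminus K$; the hypothesis quantifies over all subsets, so this is allowed.
\item The divisions require $b+d>0$ and $R>0$, both of which are hypotheses.
\item The partition identity $\mu(N\setminus K)=R-m$ uses additivity together with $K\subseteq N$.
\end{itemize}

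I would also remark that nothing here uses the open-roster split. The same argument works with assessed quantities, using $\hat D_i$ and Corollary~\ref{cor:assessed-coverage}, provided the rate bound holds for the expectation.
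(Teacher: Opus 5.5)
Your proposal is correct and is essentially the paper's proof: chain $bm\ge B(K)\ge D(N\setminus K)\ge d(R-m)$ via Theorem~\ref{thm:opening-coverage}, rearrange to $(d+b)m\ge dR$, divide using $d+b>0$ and $R>0$, and read off the limit from $d/(d+b)=1/(1+b/d)$. Your explicit bound $\mu(N\setminus K)/R\le (b/d)/(1+b/d)$ for the limiting clause is a slightly more careful version of the paper's one-line statement, not a different argument.
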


The lower bound $D(W)\ge d\,\mu(W)$ has to hold with one $d$ for every witness set, and $D(W)\le H_i$ because $D$ is a loss to $i$. Once witnesses can each destroy $i$, $\mu(W)$ exceeds $H_i$ while $D(W)$ cannot, so $d$ must fall as $R$ grows and the share~\eqref{eq:coverage-share} becomes a loose necessary condition; warning loosens it a second way, since witnesses who would warn the same third party add to $\mu(W)$ once each but to $D(W)$ once in total. The share is exact accounting only below saturation, when no witness set's disclosure loss approaches the cap; there Corollaries~\ref{cor:capable-witness} and~\ref{cor:uncatalogued-witness} are the sharp statements, and the same restriction applies to the uncatalogued term in the assessed form of the share. In either regime $R$ carries no rate of its own; the rates are attached to $B$ and $D$.

Under the premises of Proposition~\ref{thm:believer-ratio}, with $b=b_{\mathrm{haz}}+b_{\mathrm{cost}}$, $d\ge q$ per unit, and $b_{\mathrm{haz}}\le f$ per unit, the same corollary gives
\begin{equation}
\label{eq:believer-coverage}
\frac{m}{R}\ge\frac{d}{d+b_{\mathrm{haz}}+b_{\mathrm{cost}}}\ge\frac{1}{1+f/q+b_{\mathrm{cost}}/d}.
\end{equation}
As $f/q\to 0$ the required share tends to $1/(1+b_{\mathrm{cost}}/d)$.

\section{Detection decomposition}
\label{app:detection}

A nearby advanced detection is selection evidence for a multitude of unobserved civilizations~\citep{jebari2024dark}. Those unobserved others are themselves remaining uncontrolled capacity if they exist as independent hostiles. Under their selection step, with remaining capacity in place of a civilization count, the selection term raises the expected margin for waiting (Proposition~\ref{thm:jebari-selection}), and a first strike can become rational after a detection only if the detection is disproportionately likely in worlds with a small safety margin, or identifying this capacity creates a large enough target-specific extra (Theorem~\ref{thm:detection-trigger}). Even then a designated first aggressor still needs coverage. This appendix records the Bayes split behind those statements.

A detection updates remaining-capacity beliefs in two pieces: a selection term that reweights worlds by how detectable they are, and any extra value of having found this particular remaining capacity now. Let $X$ be a latent state of remaining capacity, $\mathcal{S}$ the detection event, and $\omega(X)=\Pr(\mathcal{S}\mid X)$ with $\E[\omega]\in(0,\infty)$. Write $V_{\mathrm{wait}}(X)$ for the value of the best plan that does not reveal a type-revealing attack, and $V_{\mathrm{attack}}(X)$ for the value of such an attack, both evaluated before any target-specific extra from identifying the newly detected remaining capacity. The pre-detection safety margin is $S_0(X)=V_{\mathrm{wait}}(X)-V_{\mathrm{attack}}(X)$. Let $u(X)$ be the direct reduction in that margin from identifying this target: a closing window, or any other target-specific extra of attacking now. The post-detection margin is $S_1(X)=S_0(X)-u(X)$.

\begin{theorem}[Detection decomposition]
\label{thm:detection-decomp}
Assume $\E[\omega]>0$ and that the relevant moments are finite. Then
\[
\E[S_0\mid\mathcal{S}]
=
\E[S_0]
+
\frac{\mathrm{Cov}(\omega,S_0)}{\E[\omega]},
\]
and therefore
\[
\E[S_1\mid\mathcal{S}]
=
\E[S_0]
+
\frac{\mathrm{Cov}(\omega,S_0)}{\E[\omega]}
-
\E[u\mid\mathcal{S}].
\]
\end{theorem}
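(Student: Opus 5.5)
The plan is a direct Bayes computation: reweight the prior on the latent state $X$ by the detectability $\omega$. Since $\omega(X)=\Pr(\mathcal S\mid X)$, Bayes' rule gives, for any integrable function $g$ of $X$,
\[
\E[g\mid\mathcal S]=\frac{\E[\omega\,g]}{\E[\omega]},
\]
provided $\E[\omega]>0$. That is the change of measure $d\Pr(\cdot\mid\mathcal S)=(\omega/\E[\omega])\,d\Pr$.

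I will justify this identity first, from the definition of conditional probability. Write $\Pr(\mathcal S)=\E[\Pr(\mathcal S\mid X)]=\E[\omega]$ and $\E[g\,\mathbf 1_{\mathcal S}]=\E[g\,\omega]$; the second uses the tower property and that $g$ is $X$-measurable. The finite-moment hypothesis is needed so that $\E[\omega S_0]$ exists. Because $\omega\in[0,1]$, integrability of $S_0$ already suffices, but I will state it as the paper's hypothesis.

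Next I apply the identity with $g=S_0$ and expand $\E[\omega S_0]=\E[\omega]\E[S_0]+\mathrm{Cov}(\omega,S_0)$. Dividing by $\E[\omega]$ gives the first display.

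For the second display, I use $S_1=S_0-u$ and linearity of conditional expectation under $\Pr(\cdot\mid\mathcal S)$, which requires $u$ to be integrable under the conditional law. Substituting the first display then gives the result; I will leave $\E[u\mid\mathcal S]$ unexpanded, as in the statement.

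There is no real obstacle. The only point needing care is that $\omega$ is a conditional probability given $X$, so the detection event depends on $X$ only through $\omega$. The reweighting identity rests on exactly this, and I will state it explicitly so that the covariance term is read as a selection effect. The caveat is that $u$ is a function of $X$, so any target-specific extra that depends on the detection itself is also captured by conditioning.
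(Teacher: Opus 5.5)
Your proposal is correct and follows the paper's proof: reweight by $\omega$ to get $\E[S_0\mid\mathcal S]=\E[\omega S_0]/\E[\omega]$, expand the covariance, divide by $\E[\omega]>0$, and subtract $\E[u\mid\mathcal S]$ using $S_1=S_0-u$. Your justification of the reweighting step, through the tower property and the $X$-measurability of $S_0$, fills in what the paper states in one line.
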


The identities use only Bayes and the definition of $u$.

Seeing uncoverable capacity that was not known before should make a hostile observer less ready to attack, not more. \citet{jebari2024dark} make that argument with a count of civilizations: a nearby advanced find is more likely in a crowded universe than in an empty one, so the find is evidence of many more unobserved others. The same argument runs on remaining capacity.

\begin{proposition}[Nearby-find selection, after Jebari and Asker]
\label{thm:jebari-selection}
Let remaining capacity $R$ be a real random variable. If detectability $\omega$ is a nondecreasing function of $R$ and the pre-detection safety margin $S_0$ is a nondecreasing function of $R$, then $\mathrm{Cov}(\omega,S_0)\ge 0$. Combined with Theorem~\ref{thm:detection-decomp}, $\E[S_0\mid\mathcal{S}]\ge\E[S_0]$ whenever $\E[\omega]>0$: the detection raises the expected margin for waiting.
\end{proposition}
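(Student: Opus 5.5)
The plan is to reduce the claim to the classical fact that two functions that are both nondecreasing in the same real random variable are nonnegatively correlated (Chebyshev's association inequality), and then read off the conditional-mean statement from Theorem~\ref{thm:detection-decomp}. Write $\omega=g(R)$ and $S_0=k(R)$ with $g,k$ nondecreasing. I assume, as in Theorem~\ref{thm:detection-decomp}, that the relevant moments are finite, so that $\mathrm{Cov}(\omega,S_0)$ is defined.

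\textbf{Step 1 (covariance).} Take an independent copy $R'$ of $R$. For every pair of values, $g(R)-g(R')$ and $k(R)-k(R')$ have the same sign, or one of them is zero, because both functions are nondecreasing. Hence
\[
\bigl(g(R)-g(R')\bigr)\bigl(k(R)-k(R')\bigr)\ \ge\ 0
\]
pointwise. Taking expectations and expanding with independence and identical distribution gives
\[
\E\bigl[(g(R)-g(R'))(k(R)-k(R'))\bigr]=2\,\E[g(R)k(R)]-2\,\E[g(R)]\,\E[k(R)]=2\,\mathrm{Cov}(\omega,S_0).
\]
So $\mathrm{Cov}(\omega,S_0)\ge0$. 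Finite second moments of $\omega$ and $S_0$ make the product integrable, by Cauchy--Schwarz.

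\textbf{Step 2 (conditional mean).} Substitute into the first identity of Theorem~\ref{thm:detection-decomp}:
\[
\E[S_0\mid\mathcal S]=\E[S_0]+\frac{\mathrm{Cov}(\omega,S_0)}{\E[\omega]}.
\]
Since $\E[\omega]>0$ and the numerator is nonnegative, $\E[S_0\mid\mathcal S]\ge\E[S_0]$. The interpretive sentence, that the detection raises the expected margin for waiting, is just this inequality.

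The only real obstacle is integrability and well-definedness. The proposition says only that $R$ is a real random variable, and nondecreasing functions of it need not have finite moments. I would handle this by invoking the finiteness assumption of Theorem~\ref{thm:detection-decomp} explicitly. If weaker hypotheses are wanted, one could truncate $g$ and $k$ at levels $\pm M$ (truncation preserves monotonicity), apply Step 1, and pass to the limit by monotone or dominated convergence.

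A secondary check concerns the degenerate case where $\omega$ is constant. There the covariance is zero and the inequality holds with equality, which is consistent with the statement's ``$\ge$''.
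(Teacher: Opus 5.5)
Your proposal is correct and takes the same approach as the paper: the independent-copy argument giving $2\,\mathrm{Cov}(\omega,S_0)\ge 0$, then substitution into the first identity of Theorem~\ref{thm:detection-decomp}. Your explicit use of that theorem's finite-moment hypothesis to justify expanding the expectation is a point the paper leaves implicit.
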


The first monotonicity is the selection step of~\citet{jebari2024dark} with remaining capacity in place of a civilization count: worlds with more leftover uncontrolled capacity generate more such detections. The second holds when extra remaining capacity is uncoverable: Theorem~\ref{thm:opening-coverage} then raises disclosure loss relative to a bounded extra benefit of striking now. If the newly seen remaining capacity can be included in the opening, the wait margin need not rise; that is the coverable case of Proposition~\ref{thm:beacon-shift} and the last assessed gap of Theorem~\ref{thm:tipping}. Their inverted Fermi observation, that we are still here to see an advanced neighbor, is further selection of the same sign, against worlds in which some actor already had opening coverage. The proposition is their argument transported; the remaining-capacity object is what lets it apply to any detection of previously unknown capacity, and Proposition~\ref{thm:beacon-shift} is its direct form for a single new detection, where the newly seen remaining capacity enters the receiver's assessed witnesses and lowers its attack margin unless the receiver can include that remaining capacity in its opening.

\begin{theorem}[Detection triggers attack only through selection or a target-specific extra]
\label{thm:detection-trigger}
If waiting was preferable before the detection, $\E[S_0]>0$, then $\E[S_1\mid\mathcal{S}]\le 0$ only if
\[
\E[u\mid\mathcal{S}]
\ge
\E[S_0]
+
\frac{\mathrm{Cov}(\omega,S_0)}{\E[\omega]}.
\]
A first type-revealing attack can become rational after the detection only if the detection is disproportionately likely in worlds with a small safety margin, $\mathrm{Cov}(\omega,S_0)<0$, or identifying this remaining capacity creates a sufficiently large target-specific extra, $\E[u\mid\mathcal{S}]>0$.
\end{theorem}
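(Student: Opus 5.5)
The statement to prove is Theorem~\ref{thm:detection-trigger}. It is a direct corollary of Theorem~\ref{thm:detection-decomp}, so the plan is to substitute that identity and rearrange.

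\emph{Step 1 (the inequality).} By Theorem~\ref{thm:detection-decomp},
\[
\E[S_1\mid\mathcal S]=\E[S_0]+\frac{\mathrm{Cov}(\omega,S_0)}{\E[\omega]}-\E[u\mid\mathcal S].
\]
This uses $\E[\omega]>0$ and the finite-moment hypotheses inherited from that theorem. The condition $\E[S_1\mid\mathcal S]\le 0$ is then algebraically equivalent to
\[
\E[u\mid\mathcal S]\ \ge\ \E[S_0]+\frac{\mathrm{Cov}(\omega,S_0)}{\E[\omega]},
\]
which gives the displayed necessary condition. It is in fact an equivalence; only one direction is needed here.

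\emph{Step 2 (the dichotomy).} I will argue by contraposition. Suppose $\mathrm{Cov}(\omega,S_0)\ge 0$ and $\E[u\mid\mathcal S]\le 0$. Then the right-hand side of the condition is at least $\E[S_0]>0$, while the left-hand side is $\le 0$, so the condition fails and $\E[S_1\mid\mathcal S]>0$. Hence $\E[S_1\mid\mathcal S]\le 0$ requires either $\mathrm{Cov}(\omega,S_0)<0$ or $\E[u\mid\mathcal S]>0$.

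The words ``sufficiently large'' in the statement I read as exactly the displayed threshold. In the covariance-nonnegative case, that threshold is at least $\E[S_0]$.

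\emph{Step 3 (linking margins to rationality).} I identify ``a first type-revealing attack becomes rational after the detection'' with a nonpositive expected post-detection margin, $\E[S_1\mid\mathcal S]\le 0$. This is justified because $S_1=V_{\mathrm{wait}}-V_{\mathrm{attack}}-u$ is the wait-versus-attack margin including the target-specific extra. Given $\mathcal S$, the attacker compares expected values over $X$, and $V_{\mathrm{wait}}$, $V_{\mathrm{attack}}$ and $u$ are all defined on the latent state.

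I expect this modeling identification to be the only real obstacle, namely whether rationality is judged on the conditional expectation of the margin. I will state it explicitly as the operative reading, matching how $\E[S_0]>0$ is used in the hypothesis as ``waiting was preferable before the detection.''

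Once that identification is fixed, the rest is pure rearrangement. It needs nothing beyond Bayes, the decomposition, and the definition $S_1=S_0-u$. No coverage result (Theorem~\ref{thm:opening-coverage}) is needed here; that theorem enters only through the remark that even a triggered attacker must still meet coverage.
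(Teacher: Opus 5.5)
Your proposal is correct and follows the paper's own proof, which simply substitutes Theorem~\ref{thm:detection-decomp} into $\E[S_1\mid\mathcal S]\le 0$ and rearranges. Your contrapositive for the dichotomy, and your explicit reading of ``rational after the detection'' as a nonpositive expected post-detection margin, spell out steps the paper leaves implicit.
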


If $\mathrm{Cov}(\omega,S_0)\ge 0$, the selection term cannot reduce the expected safety margin, and attack can become rational only through a large enough $\E[u\mid\mathcal{S}]$; Proposition~\ref{thm:jebari-selection} is that case. Negative dependence, a search that only turns up isolated remaining capacity or a population in which the worlds with more remaining capacity hide best, can make the covariance negative, and Theorem~\ref{thm:detection-trigger} covers it without the proposition. Both theorems are Bayes plus the definition of $u$; their content is the split itself. Detection changes who is known; whether it also creates a private first-mover premium is the size of $\E[u\mid\mathcal S]$ against the selection term.

Even a designated first aggressor still needs the coverage of Theorem~\ref{thm:opening-coverage}, by destruction, by control, or by concealment.

\section{Application: a transmission that reveals the sender}
\label{app:beacon}

The main text is about the first strike. This appendix applies the same accounting to a transmission that reveals its sender, the question the debate over messaging to extraterrestrial intelligence (METI) has argued as exposure. The results here add one hypothesis the main text does not need, Assumption~\ref{ass:observe-relay}, and their force depends on it.

A beacon is a transmission that reveals the sender's existence to a receiver. What else it reveals is less than the word location suggests: a direction and distance for a transmitter as it was when the signal left, long ago over interstellar distances, plus whatever the signal's observable features and chosen content imply about capability. The receiver acquires a posterior over where the sender's capacity is and how much there is, not a catalogue entry. Under remaining-capacity hostility the beacon changes the receiver's decision through coverage, and the direction of the change depends on whether the receiver can attempt to include the sender in its opening.

Let $s$ be the sender and split receiver $i$'s leftovers into the catalogued actors $\mathcal K_i$ it can remove in an opening, the catalogued actors it cannot, and the uncatalogued. Let $\hat m_{s,i}$ be $i$'s assessment of the loss $s$ can still take from it, $p_s$ the assessed probability that an attempt to include $s$ removes the sender's capacity, and $\pi_s$ the probability $i$ assigns that $s$ can bring about $i$'s destruction. A failed attempt leaves $s$ alive, informed, and attacked, a witness with lump at least $\pi_sq_sH_i$; $i$ \emph{can include} $s$ when the expected timing benefit the sender adds covers that expected failure loss, and otherwise $s$ joins the witnesses. What the beacon itself proves is a communication channel to $i$'s region. Under Assumption~\ref{ass:observe-relay}, that a sender able to transmit to $i$'s region can observe an opening by $i$ that is not concealed from it and relay that fact to actors within its reach, some of which $i$ fears, that channel is warning capacity and bounds $\hat m_{s,i}$ away from zero; the signal names no recipient and $i$ is never sure the leftover is empty, so $i$ cannot rule the relay out. Without the assumption the beacon proves transmission and nothing more. Several receivers of one beacon need no separate treatment: before any type-revealing act none knows another is hostile, so each decides at its own event, and receivers who coordinate are one actor. A receiver need not be hostile, and one that is not contributes only whatever exchange it offers.

Appendix~\ref{app:free-beacon} records the accounting. A beacon that cannot be included weakly lowers the receiver's attack margin; one that can be included tips the receiver only when the sender closes the last assessed gap, net of the risk that the attempt leaves a capable sender alive (Theorem~\ref{thm:tipping}). That gap still contains the uncatalogued lump, so a believer is tipped only if it was already prepared to stake its survival on the opening. Signed by reach, a beacon lowers its sender's payoff by more than the cost of transmitting only if some in-reach hostile opens inside the sender's horizon and removes the sender, and its strategic increment is positive through an acceptable exchange, a delayed out-of-reach hostile, or an in-reach opening delayed past the horizon (Theorem~\ref{thm:free-beacon}). What an in-reach hostile must believe in order to open at all, and what it must gain in order to remove the sender, are both narrow.

The clause in Assumption~\ref{ass:observe-relay} is about seeing the strike, not about having someone to tell; if the opening is concealed from $s$, that is concealment, not an empty leftover. Under the Dark Forest premises an inclusion attempt then needs near-certain success: writing $\beta_s$ for the sender's removal-cost growth relative to $H_i$ and $h_s$ for the chance that $s$ strikes $i$ during the wait, a receiver that can include $s$ has
\begin{equation}
\label{eq:attempt-threshold}
p_s\ \ge\ \frac{\pi_sq}{\pi_sq+h_s+\beta_s},
\end{equation}
which tends to one as hiding works and urgency against the sender vanishes (Proposition~\ref{thm:attempt-threshold}). A transmitter's apparent position at emission is rarely a near-certain fix on where the sender's forces now are, and a sure kill of the sender does not close the roster.

\begin{theorem}[Harm beyond the transmission cost requires sure kill, urgency, disbelief, or concealment]
\label{thm:harm-requires}
Let $c_{\mathrm{tx}}\ge 0$ be the sender's cost of transmitting, let Assumptions~\ref{ass:open-roster}, \ref{ass:chosen-beacon}, and~\ref{ass:observe-relay} hold, and let every hostile receiver apply the Dark Forest premises of Proposition~\ref{thm:believer-ratio}. Then the beacon lowers the sender's payoff by more than $c_{\mathrm{tx}}$ only if some hostile receiver $i$ that can strike $s$ inside the sender's horizon opens inside it and removes $s$. That opening is a rational first attack, so it carries $B_{\mathrm{cost}}(\mathcal K_i)\ge(q-h(\mathcal K_i))H_i$ against the actors $i$ has not found, whether or not $s$ is among what it removes: the receiver has staked at least a $(q-h(\mathcal K_i))$-share of its survival on the opening (Remark~\ref{rem:last-attack}). Removing $s$ then requires at least one of the following.
\begin{enumerate}
\item[(a)] \emph{Sure kill.} $i$ attempts to include $s$, so $p_s$ satisfies~\eqref{eq:attempt-threshold}, and the attempt succeeds.
\item[(b)] \emph{Urgency.} $i$ opens without including $s$ while $s$ can learn of the opening, which requires $B_{\mathrm{cost}}(\mathcal K_i)\ge(\pi_sq-h(\mathcal K_i))H_i$ against the sender's witness lump; $i$ then removes $s$ in a continuation operation that pays only for $s$'s remaining capacity against $i$ after its warning is out (Remark~\ref{lem:continuation}).
\item[(c)] \emph{Disbelief.} $\pi_s=0$: no actor $i$ fears lies within the sender's reach, so the opening is unrestrained by the sender, who is then removed as in (b).
\item[(d)] \emph{Concealment.} The opening is concealed from $s$, who is then removed in a continuation operation that pays new disclosure toward $s$ and the lump of a surviving $s$.
\end{enumerate}
Otherwise the beacon costs the sender at most $c_{\mathrm{tx}}$, and its strategic increment is positive whenever it draws an exchange or delays a hostile that would have opened inside the horizon (Theorem~\ref{thm:free-beacon}); it is worth sending when the expected increment exceeds $c_{\mathrm{tx}}$.
\end{theorem}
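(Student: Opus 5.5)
The argument reduces sender harm to a removal event and then runs a case analysis on how the receiver's rational opening treats $s$.

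\emph{Step 1: harm beyond $c_{\mathrm{tx}}$ requires removal inside $T$.} I would decompose the sender's payoff change as $\Delta U_s=\Delta V_s-c_{\mathrm{tx}}$, where $\Delta V_s$ is the strategic increment recorded in Appendix~\ref{app:free-beacon}. The claim is that $\Delta V_s<0$ only on histories in which some hostile receiver $i$, able to strike $s$ inside $T$, opens inside $T$ and removes $s$. On every other history the beacon either does nothing to $s$'s survival, or it adds $s$ to $i$'s witnesses. Adding a witness can only raise $\hat D_i$, which by Theorem~\ref{thm:opening-coverage} and Corollary~\ref{cor:assessed-coverage} can only delay $i$'s opening. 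I would take this sign statement from Theorem~\ref{thm:free-beacon} and Theorem~\ref{thm:tipping} rather than rederive it.

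\emph{Step 2: the opening's stake against the unfound.} The opening in Step 1 is a rational first attack by a believer. Corollary~\ref{cor:uncatalogued-witness}, with $\pi_u=1$ under the populated-forest premise and $q_u\ge q$, gives $B_{\mathrm{cost}}(\mathcal K_i)\ge(q-h(\mathcal K_i))H_i$. This is the display in the statement, and it holds whether or not $s$ is removed, since the bound does not depend on the catalogued set. Remark~\ref{rem:last-attack} then supplies the interpretation as a staked share of survival.

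\emph{Step 3: exhaustive cases.} I would split on Remark~\ref{thm:no-external-follower} applied to $j=s$ at $i$'s opening. The case that $s$ enters $[i]$ is excluded because $s$ is removed, not allied.

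If $s\in K$, then $i$ attempted inclusion. Proposition~\ref{thm:attempt-threshold} gives \eqref{eq:attempt-threshold}, and removal means the attempt succeeded, which is case (a).

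If the opening is unseen by $s$, this is concealment, which is case (d). A later removal then pays new disclosure toward $s$.

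If $s$ is a witness, the split is on $\pi_s$. When $\pi_s>0$, Corollary~\ref{cor:capable-witness} with $w=s$ and $q_s\ge q$ gives $B_{\mathrm{cost}}(\mathcal K_i)\ge(\pi_sq-h(\mathcal K_i))H_i$. Removal then happens in a continuation operation, which is case (b), citing Remark~\ref{lem:continuation}. When $\pi_s=0$, Assumption~\ref{ass:observe-relay} says no feared actor lies within the sender's reach, which is case (c).

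The "otherwise" clause is the contrapositive of Step 1 together with Theorem~\ref{thm:free-beacon}.

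\emph{Main obstacle.} The hard part is Step 1, making the reduction to "opens inside $T$ and removes $s$" airtight. I would need to check two things. First, that a failed inclusion, which leaves $s$ alive and attacked, cannot by itself lower $s$'s payoff below $-c_{\mathrm{tx}}$ without a later removal. Second, that the beacon cannot advance a hostile that would not otherwise have opened at all.

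My first attempt would handle both through the tipping condition of Theorem~\ref{thm:tipping}. Any advanced opening must either include $s$, which lands in (a), or leave $s$ as a witness, which lands in (b)–(d). I would also rely on Assumption~\ref{ass:chosen-beacon} to rule out beacon content that raises $\beta_s$ enough to change which of these cases applies.
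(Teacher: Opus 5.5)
Your proposal is correct and follows the paper's proof. Theorem~\ref{thm:free-beacon} reduces harm beyond $c_{\mathrm{tx}}$ to a counted opening that destroys $s$, Corollary~\ref{cor:uncatalogued-witness} with $\pi_u=1$ and $q_u\ge q$ gives the displayed stake, and the split into successful inclusion, concealment, and a witness with $\pi_s>0$ or $\pi_s=0$ yields (a)--(d); the paper indexes that split by the reach classes $O$ and $L$ rather than by Remark~\ref{thm:no-external-follower}.

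Two small differences from the paper. First, the paper adds a line placing already-revealed hostiles (those whose opening predates the beacon) in (b)--(d), which your write-up should also include. Second, your Step~1 worries need no appeal to Theorem~\ref{thm:tipping}: the inclusion $\{\Delta V_s<0\}\subseteq\mathcal H$, with every negative term there being the sender's destruction, comes directly from Theorem~\ref{thm:free-beacon}, which does not assume no tipping.
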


A sender that transmits from a position decoupled from its forces and keeps them dispersed closes (a), and one that every hostile fears someone within reach of closes (c); such a sender is harmed only by a hostile that opens despite the sender's lump, through urgency or concealment, and then finds what is left of the sender worth a further operation (Corollary~\ref{cor:decoupled-sender}). Korhonen's deterrence needs the sender to be able to retaliate~\citep{korhonen2013mad}; under Assumption~\ref{ass:observe-relay} the beacon works through a weaker capacity, since it adds the sender to every hostile's witnesses whether or not the sender can strike, and it is self-verifying proof of the channel that the assumption turns into deterrence. A transmission that says only that the sender exists and can transmit therefore proves that channel without advertising the strike capacity that would raise the receiver's urgency against the sender (Remark~\ref{rem:deterrence-witnessing}); whether that minimal content is the best content is a signaling question left open.

What the necessary conditions replace is the reading of a beacon as exposure alone: the same signal that puts the sender on a hostile's list raises that hostile's cost of opening at all. Every sender quantity is the receiver's posterior given what the beacon showed, and the results hold for whatever posterior the receiver forms. Whether a given transmission is worth sending is $\E[\Delta V_s]$ against $c_{\mathrm{tx}}$ at the sender's own valuations, with harm narrowed to the routes above; the paper estimates neither side for any sender, Earth included, and who transmits is then a coordination problem~\citep{kerins2021mutual}.

\section{Beacon accounting}
\label{app:free-beacon}

Appendix~\ref{app:beacon} states the harm theorem and the attempt threshold. This appendix records the tipping formalities, the sender-horizon payoffs, the sign classification of the beacon's strategic increment, and the sign-probability comparison a sender can run at its own values. Notation is as in Appendix~\ref{app:beacon}.

\begin{assumption}[Chosen beacon]
\label{ass:chosen-beacon}
The sender was not catalogued by $i$ before the beacon, and the beacon does not lower $i$'s assessed disclosure loss from witnesses other than the sender: writing $\mathcal U_i^{-s}$ for the uncatalogued set other than the sender, $\E[D_i(\mathcal C\cup\mathcal U_i^{-s})\mid\mathcal I_i\vee\mathcal S]\ge\E[D_i(\mathcal C\cup\mathcal U_i)\mid\mathcal I_i]$ for every catalogued witness set $\mathcal C$.
\end{assumption}

Equality holds when uncatalogued actors form a Poisson process. The inequality holds under positive association, including Proposition~\ref{thm:jebari-selection}. In the constant-rate regime the assumption reads $\E[\mu(\mathcal U_i^{-s})\mid\mathcal I_i\vee\mathcal S]\ge\E[\mu(\mathcal U_i)\mid\mathcal I_i]$.

\begin{proposition}[Beacon shifts the margin by reach]
\label{thm:beacon-shift}
Under Assumption~\ref{ass:chosen-beacon}, write $\hat A_i'$ and $\hat D_i'$ for the assessed margin and disclosure loss after the beacon. If $i$ cannot include $s$, then $\hat A_i'\le\hat A_i$, strictly if the sender adds disclosure loss for $i$. If $i$ can include $s$, then
\[
\hat A_i'\le B_i(\mathcal K_i)+p_s\bigl[B_i(\mathcal K_i\cup\{s\})-B_i(\mathcal K_i)\bigr]-(1-p_s)\,\delta_s-\hat D_i ,
\]
with equality in the Poisson case. In the constant-rate regime these read $\hat R_i'\ge\hat R_i+\hat m_{s,i}$, hence $\rho_i'\le m_i/(\hat R_i+\hat m_{s,i})\le\rho_i$ when $i$ cannot include $s$, strictly if $\hat m_{s,i}>0$ and $m_i>0$, and $\rho_i'\le(m_i+p_s\hat m_{s,i})/(\hat R_i+\hat m_{s,i})$ when it can.
\end{proposition}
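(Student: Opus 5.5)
Since $\hat A_i=B_i(\mathcal K_i)-\hat D_i-C$ is a difference of a timing benefit and an assessed disclosure loss, I will compare the pre- and post-beacon versions of each piece separately, holding the catalogued opening $\mathcal K_i$ and campaign cost fixed. On the disclosure side, Assumption~\ref{ass:chosen-beacon} gives the key inequality. After the beacon, $s$ is catalogued, so the uncatalogued set becomes $\mathcal U_i^{-s}$. For any catalogued witness set $\mathcal C$, the post-beacon assessed loss is $\E[D_i(\mathcal C\cup\{s\}\cup\mathcal U_i^{-s})\mid\mathcal I_i\vee\mathcal S]$ when $s$ is left as a witness. Because $D$ is nondecreasing under inclusion, this is at least $\E[D_i(\mathcal C\cup\mathcal U_i^{-s})\mid\mathcal I_i\vee\mathcal S]$, and the assumption bounds that below by the pre-beacon $\hat D_i$. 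In the Poisson case the assumption holds with equality, which I will carry along to get the equality clause.

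\emph{Case where $i$ cannot include $s$.} Here $s$ joins the witnesses and $B_i(\mathcal K_i)$ is unchanged, so $\hat D_i'\ge\hat D_i$ and hence $\hat A_i'\le\hat A_i$. For strictness, the hypothesis that the sender adds disclosure loss is read as strict monotonicity of $D$ on adding $s$. That strict gap passes through the expectation.

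\emph{Case where $i$ can include $s$.} I will condition on whether the attempt removes $s$. With probability $p_s$ it does, so the benefit becomes $B_i(\mathcal K_i\cup\{s\})$ and the witnesses are unchanged apart from the uncatalogued part; the disclosure term is then at least $\hat D_i$ by the same assumption. With probability $1-p_s$ the attempt fails. The benefit stays $B_i(\mathcal K_i)$, and $s$ is now an alive, informed, attacked witness. I write $\delta_s$ for its extra loss, the failure lump of Appendix~\ref{app:beacon}, which is at least $\pi_sq_sH_i$. Averaging the two branches and using $\hat D'\ge\hat D_i$ in each gives the displayed bound. The main obstacle is making the failure-branch accounting rigorous, namely that the extra loss from a failed $s$ adds to rather than overlaps with the other witnesses' loss. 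Since $D$ need not be additive, I would simply define $\delta_s$ as the conditional excess $\E[D(W\cup\{s\})-D(W)]$ given failure. That makes the bound an identity up to the assumption's inequality, and equality in the Poisson case.

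\emph{Constant-rate forms.} With additive $\mu$, the post-beacon assessed remainder is $\hat R_i'=\mu(\text{catalogued})+\hat m_{s,i}+\E[\mu(\mathcal U_i^{-s})\mid\mathcal I_i\vee\mathcal S]$. The constant-rate form of Assumption~\ref{ass:chosen-beacon} then gives $\hat R_i'\ge\hat R_i+\hat m_{s,i}$. Writing $\rho_i=m_i/\hat R_i$ for the opening share, if $i$ cannot include $s$ the numerator stays $m_i$, so $\rho_i'\le m_i/(\hat R_i+\hat m_{s,i})\le\rho_i$, strictly when $m_i>0$ and $\hat m_{s,i}>0$. If $i$ can include $s$, the expected removed measure is $m_i+p_s\hat m_{s,i}$, which gives the last bound.
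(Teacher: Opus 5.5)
Your proposal is correct and follows the paper's proof essentially step for step. It uses monotonicity of $D_i$ followed by Assumption~\ref{ass:chosen-beacon} in the cannot-include case, a success/failure split over the attempt with $\delta_s$ as the failure-branch excess (defining it as the exact conditional excess is what makes the Poisson equality clause hold literally), and the same substitution into the additive measure for $\hat R_i'$ and the $\rho_i'$ bounds. One small tightening: apply the assumption to the disclosure loss from witnesses other than $s$ after averaging over the attempt rather than ``in each branch,'' since it controls only the expectation given $\mathcal I_i\vee\mathcal S$, not the expectation conditional on whether the attempt succeeded.
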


\begin{theorem}[Tipping bound]
\label{thm:tipping}
The beacon makes a first attack by $i$ rational that was not rational before it, $\hat A_i<0\le\hat A_i'$, only if $i$ can include $s$ and
\begin{equation}
\label{eq:tipping-general}
\hat D_i-B_i(\mathcal K_i)\le p_s\bigl[B_i(\mathcal K_i\cup\{s\})-B_i(\mathcal K_i)\bigr]-(1-p_s)\,\delta_s :
\end{equation}
the shortfall before the beacon is at most the expected timing benefit the sender adds, less the expected loss from a sender that survives the attempt. A receiver for which the sender adds no timing benefit is never tipped, and neither is one whose attempt is more likely to leave a capable sender alive than the added benefit can pay for. In the constant-rate regime, \eqref{eq:tipping-general} is
\begin{equation}
\label{eq:tipping}
\hat R_i-m_i+(1-p_s)\,\hat m_{s,i}\le\frac{b_i}{d_i}\,(m_i+p_s\hat m_{s,i}),
\end{equation}
a receiver with $b_i=0$ is never tipped, and if every hostile receiver has $\hat R_i-m_i\ge\varepsilon>0$ and $b_i/d_i<\varepsilon/(m_i+\hat m_{s,i})$, no receiver is tipped. With $p_s=1$, \eqref{eq:tipping} is $\hat R_i-m_i\le(b_i/d_i)(m_i+\hat m_{s,i})$.
\end{theorem}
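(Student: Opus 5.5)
The plan is to prove the Tipping bound by contraposition through Proposition~\ref{thm:beacon-shift}, and then to specialize the resulting inequality to the constant-rate regime.

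\emph{The sender cannot be included.} In this case Proposition~\ref{thm:beacon-shift} gives $\hat A_i'\le\hat A_i$. Together with $\hat A_i<0$ this yields $\hat A_i'<0$, so the beacon cannot tip $i$. Hence $\hat A_i<0\le\hat A_i'$ forces the case in which $i$ can include $s$.

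\emph{The sender can be included.} Here I use the bound
\[
\hat A_i'\le B_i(\mathcal K_i)+p_s\bigl[B_i(\mathcal K_i\cup\{s\})-B_i(\mathcal K_i)\bigr]-(1-p_s)\,\delta_s-\hat D_i ,
\]
where campaign cost is set to zero or absorbed into $B_i$, as in that proposition. Requiring $0\le\hat A_i'$ and rearranging gives \eqref{eq:tipping-general} directly.

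The two verbal consequences follow from \eqref{eq:tipping-general}. The left side is positive because $\hat A_i<0$ means $\hat D_i>B_i(\mathcal K_i)$ up to campaign cost, and dropping $C\ge0$ only strengthens this. Now suppose the sender adds no timing benefit, so the bracket is zero. The right side is then $-(1-p_s)\delta_s\le 0$, which cannot reach a positive left side. If instead $(1-p_s)\delta_s$ exceeds the added benefit, the right side is negative and again no tipping occurs.

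\emph{Constant-rate form.} I substitute $B_i\le b_i\mu$ and $D\ge d_i\mu$. In that regime the failure loss $\delta_s$ of a surviving sender is read as $d_i\hat m_{s,i}$. I then divide by $d_i>0$, using $\hat D_i\ge d_i\hat R_i$, $B_i(\mathcal K_i)\le b_im_i$, and a sender-added benefit of at most $b_i\hat m_{s,i}$. This gives
\[
\hat R_i-\frac{b_i}{d_i}m_i+(1-p_s)\hat m_{s,i}\le\frac{b_i}{d_i}p_s\hat m_{s,i},
\]
which has to be matched to \eqref{eq:tipping}. The most delicate bookkeeping step is checking that the left side is exactly $\hat R_i-m_i$ plus the sender term. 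One route uses the constant-rate normalization of Proposition~\ref{thm:beacon-shift}, in which the margin is $b_im-d_i(\hat R-m)$, so the witnesses are $\hat R_i-m_i$ rather than $\hat R_i$. With that normalization, \eqref{eq:tipping} follows after moving $b_im_i/d_i$ to the right.

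The remaining claims follow from \eqref{eq:tipping}. Setting $b_i=0$ makes the right side zero while the left side is at least $\hat R_i-m_i>0$, so $i$ is not tipped. Under $\hat R_i-m_i\ge\varepsilon$, the right side is at most $(b_i/d_i)(m_i+\hat m_{s,i})<\varepsilon$, which again rules out tipping. Finally, $p_s=1$ is direct substitution.

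\emph{Main obstacle.} The step I expect to be hardest is reconciling the general $\delta_s$ with its constant-rate counterpart $d_i\hat m_{s,i}$, and making sure the witness measure is $\hat R_i-m_i$ consistently. I would fix these identifications explicitly from the constant-rate lines of Proposition~\ref{thm:beacon-shift} before rearranging.
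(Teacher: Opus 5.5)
Your proposal is correct and is essentially the paper's proof: the cannot-include case from $\hat A_i'\le\hat A_i$ in Proposition~\ref{thm:beacon-shift}, the can-include case by rearranging that proposition's bound, the two verbal claims from the sign of the right side against a left side equal to $-\hat A_i>0$, and the constant-rate form via $\hat A_i=b_im_i-d_i(\hat R_i-m_i)$, an added benefit of $b_i\hat m_{s,i}$, and $\delta_s=d_i\hat m_{s,i}$. Two small cleanups: take $C=0$ outright, as the beacon accounting does, since your remark that dropping $C\ge 0$ ``only strengthens'' $\hat D_i>B_i(\mathcal K_i)$ runs the wrong way; and start from $\hat D_i=d_i(\hat R_i-m_i)$ rather than $d_i\hat R_i$ (your corrected bookkeeping, which keeps $-b_im_i$ on the left before moving it right, is the right one).
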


A beacon that can be included raises what the opening may remove and what the receiver counts as remaining together; the two cancel in the margin except when the expected timing benefit the sender adds, net of the risk that the sender survives the attempt, exceeds the shortfall the opening had before. Uncertainty about the sender works against tipping twice: it lowers $p_s$, because capacity only loosely fixed by a transmitter's apparent position is harder to find and remove, and it raises the failure term, because a sender that might be capable and survives is the lump of Corollary~\ref{cor:capable-witness} discounted only by $\pi_s$.

Assumption~\ref{ass:chosen-beacon} is not needed for the threshold itself: tipping requires $\hat D_i'\le B_i(\mathcal K_i')$, assessed disclosure loss from what remains watching at most the timing benefit of the opening, which in the constant-rate regime is $\hat R_i'-m_i'\le(b_i/d_i)\,m_i'$. The assumption converts this into the pre-beacon form~\eqref{eq:tipping-general} and rules out tipping of a receiver that cannot include $s$; without it, such a receiver is tipped only if the sender turns out weaker than the unknown capacity it replaces in the receiver's count. Either way the sender closed the last assessed gap, and that gap is assessed, not known: it still contains the uncatalogued term of Corollary~\ref{cor:uncatalogued-witness}, which the opening's timing benefit must pay for whether or not the sender is included, so the receiver never concludes that the sender was the last piece, only that the pieces it has not found were already paid for.

\begin{corollary}[Capable uncovered leftover]
\label{cor:uncovered-kappa}
Suppose $i$ assigns probability $\pi_w$ that some leftover $w$ outside $\mathcal K_i$ can bring about $i$'s destruction, by striking or by warning an actor that can, and that learning of a strike raises the probability it does so by $q_w$; suppose $i$ assesses that $w$ survives the opening with probability $1-p'>0$. Then $\hat D_i'\ge(1-p')\,\pi_wq_wH_i$ after the beacon, and the beacon tips $i$ only if
\begin{equation}
\label{eq:patience-interval}
B_i(\mathcal K_i\cup\{s\})\ge(1-p')\,\pi_wq_wH_i .
\end{equation}
With $B_i(\mathcal K_i\cup\{s\})\le h_iH_i+B_{\mathrm{cost},i}$ as in Corollary~\ref{cor:capable-witness}, where $h_i=h(\mathcal K_i\cup\{s\})$ is the probability that the opening's targets would have struck $i$ during the wait and $B_{\mathrm{cost},i}$ is the growth in the cost of removing them later, tipping requires $B_{\mathrm{cost},i}\ge\bigl((1-p')\pi_wq_w-h_i\bigr)H_i$. The sender is itself such a leftover, with $1-p'=1-p_s$ and $\pi_w=\pi_s$. An actor $i$ has not found is such a leftover with $p'=0$, since no opening removes it; \eqref{eq:patience-interval} then reads $B_i(\mathcal K_i\cup\{s\})\ge\pi_uq_uH_i$, the lump of Corollary~\ref{cor:uncatalogued-witness}, and a receiver whose opening does not pay for the actors it has not found is not tipped by any sender. If $s$ cannot be included, neither side of~\eqref{eq:patience-interval} changes for any other $w$, so a receiver for which it failed before the beacon still fails it. In the constant-rate regime, if the assessed uncovered capacity attributable to such leftovers is at least $\kappa(m_i+\hat m_{s,i})$, tipping requires $b_i/d_i\ge\kappa$.
\end{corollary}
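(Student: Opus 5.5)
The plan is to reduce the corollary to the post-beacon form of opening coverage, applied to the assessment after the beacon. Tipping means $\hat A_i<0\le\hat A_i'$. In particular $\hat A_i'\ge 0$, so the post-beacon opening $\mathcal K_i'=\mathcal K_i\cup\{s\}$ (or $\mathcal K_i$ alone if $s$ is not included) is a rational first attack given $\mathcal I_i\vee\mathcal S$. Corollary~\ref{cor:assessed-coverage} then gives $B_i(\mathcal K_i')\ge\hat D_i'$. Since $B$ is nondecreasing under inclusion, $B_i(\mathcal K_i')\le B_i(\mathcal K_i\cup\{s\})$. It therefore suffices to establish the lower bound $\hat D_i'\ge(1-p')\pi_wq_wH_i$.

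For that lower bound I will follow the proof of Corollary~\ref{cor:capable-witness}, conditioning on whether $w$ survives the opening. On the event of probability $1-p'$ that $w$ survives, $w$ is in $N\setminus K$. Monotonicity of $D$ gives $D(N\setminus K)\ge D(\{w\})\ge\pi_wq_wH_i$. On the complementary event, $D\ge 0$. Taking the expectation given $\mathcal I_i\vee\mathcal S$ yields the bound and hence \eqref{eq:patience-interval}. The main obstacle is this conditioning. The survival event and the capability event must be treated as independent, or the lump must be read as defined conditional on survival. I would state $\pi_wq_w$ as the assessment conditional on $w$ surviving, so that the product is legitimate without any independence assumption.

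The refinement with $B_{\mathrm{cost},i}$ follows by substituting $B_i(\mathcal K_i\cup\{s\})\le h_iH_i+B_{\mathrm{cost},i}$ and rearranging. The special cases are then read off.
\begin{itemize}
\item For the sender, a failed inclusion attempt leaves $s$ alive with probability $1-p_s$, with capability probability $\pi_s$.
\item For $u\in\mathcal U_i$, Assumption~\ref{ass:open-roster} says every opening leaves $u$ alive, so $p'=0$ and the bound coincides with Corollary~\ref{cor:uncatalogued-witness}. This shows that no sender tips a receiver whose opening fails to pay the uncatalogued lump.
\end{itemize}

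When $s$ cannot be included, the opening is still $\mathcal K_i$. Its left side $B_i(\mathcal K_i)$ is unchanged, and so is $w$'s survival probability under it. The bound on the right involves only $w$'s assessed lump, which I take as unchanged by the beacon for $w\neq s$. Hence a failure before the beacon persists after it.

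For the constant-rate clause, I will combine $B\le b_i\mu$ and $D\ge d_i\mu$ with the assumed uncovered capacity $\kappa(m_i+\hat m_{s,i})$. This gives $b_i(m_i+\hat m_{s,i})\ge d_i\kappa(m_i+\hat m_{s,i})$, and dividing yields $b_i/d_i\ge\kappa$.
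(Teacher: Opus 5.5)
Your proposal is correct and follows essentially the paper's proof. You bound $\hat D_i'$ below by conditioning on $w$ surviving the opening, combine $\hat A_i'\ge 0$ with monotonicity of $B_i$ (the removed set is at most $\mathcal K_i\cup\{s\}$), and read off the special cases. There are two minor differences. First, you state explicitly that $\pi_wq_w$ must be read conditional on survival (or as independent of it), which the paper's ``taking expectations'' leaves implicit. Second, you get the constant-rate clause directly from $B\le b_i\mu$ and $D\ge d_i\mu$ rather than from the constant-rate form of Theorem~\ref{thm:tipping}; this route is slightly more self-contained but, like the paper's, needs $m_i+\hat m_{s,i}>0$ in order to divide.
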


A leftover that can bring about the attacker's destruction, itself or through whom it can warn, is a lump in $\hat D_i$, discounted by $\pi_w$ but not turned into a rate, so a receiver whose removal-cost growth is below $\bigl((1-p')\pi_wq_w-h_i\bigr)H_i$ is not tipped by any sender. A beacon tips a receiver only when the sender closes the last assessed gap, and that gap includes the leftovers the receiver has never met. For a believer, whose populated-forest premise sets $\pi_u=1$, Corollary~\ref{cor:uncatalogued-witness} puts $qH_i$ in $\hat D_i'$ whatever the opening removes, so a beacon can tip a believer only if the opening's timing benefit already pays for the hunters it has not found: the believer was already prepared to make this strike its last (Remark~\ref{rem:last-attack}).

\begin{remark}[Inclusion cost]
\label{rem:inclusion-cost}
A rational remaining-capacity hostile does not pay inclusion cost for capacity it does not fear, and does not attempt an inclusion more likely to leave a capable sender alive than to pay for itself. Class $O$ harm, attempting to include the sender rather than leaving it leftover, therefore requires the expected benefit $p_s\hat m_{s,i}$ of removing the sender to exceed the extra inclusion cost plus the failure term $(1-p_s)\delta_s$, and finding capacity only loosely fixed by a transmitter's apparent position is part of that cost. Class $N$ delay benefit requires $\hat m_{s,i}>0$, which for a hostile the sender cannot strike is the warning channel the beacon has just demonstrated. Inside $i$'s horizon a growing civilization can have large assessed $\hat m_{s,i}$, and exposure scales with how much the receiver believes the sender threatens it, not with the sender's existence. Table~\ref{tab:illustration-beacon} in Appendix~\ref{app:illustration} instantiates tipping and class-$N$ delay at chosen numbers, which are not estimates.
\end{remark}

\paragraph{Sender horizon.}
Let $T$ be the sender's chosen payoff horizon, not a shared physical clock. Write $\tau_i$ for receiver $i$'s first type-revealing attack event that the sender's payoff counts inside $T$, $\tau_i=\infty$ if none, and $\tau_i'$ for the corresponding event after the beacon, which changes $i$'s catalogue only by adding the sender. If $i$ cannot include $s$, Proposition~\ref{thm:beacon-shift} gives $\hat A_i'\le\hat A_i$, so the sender's payoff counts no earlier attack. If $i$ can include $s$ and Assumption~\ref{ass:chosen-beacon} holds with equality, then $\hat A_i'\ge\hat A_i$, because can include means the attempt's expected added benefit covers its expected failure loss, and the sender is in the opening at $\tau_i'$. Under no tipping, the beacon creates no in-reach opening that would not have occurred inside the sender's horizon.

\paragraph{Sender payoffs.}
Write $c_{\mathrm{tx}}\ge 0$ for the sender's cost of transmitting, in the sender's own utility: the resources the beacon consumes and anything else the sender gives up by sending that does not depend on who receives it. The paper does not estimate it for any sender. The sender's payoff change from the beacon is $\Delta U_s=\Delta V_s-c_{\mathrm{tx}}$, where the strategic increment $\Delta V_s$ collects everything that depends on the receivers' responses; the sign results below are about $\Delta V_s$, and the beacon is worth sending, $\E[\Delta U_s]>0$, exactly when $\E[\Delta V_s]>c_{\mathrm{tx}}$. Write $H_s>0$ for the sender's loss from destruction and $\Theta_{s,i}(\tau)$ for its continuation loss as an uncatalogued leftover when receiver $i$ reveals hostility at a counted event $\tau$, with $0\le\Theta_{s,i}(\tau)<H_s$ inside $T$, $\Theta_{s,i}$ nonincreasing, and $\Theta_{s,i}=0$ when no such event is counted. After any reply the sender keeps the silent continuation and screens later exchange; the value of that enlarged policy set over silence is $G_s\ge 0$, strictly positive exactly when some receiver offers an exchange the sender strictly prefers to silence. Hostile receivers fall into three reach classes: $O$, attempts to include $s$ in an opening the sender's horizon counts, which requires that $i$ can include $s$ and that the attempt's expected benefit exceeds the inclusion cost of Remark~\ref{rem:inclusion-cost}; $L$, does not include $s$ but can strike $s$ inside the horizon once catalogued; $N$, cannot strike $s$ inside the horizon. Which class a hostile is in depends on where the sender's capacity is, which the sender knows and the hostile assesses, so from the sender's side the classes are probabilities, as Corollary~\ref{cor:majority} treats them. A hostile already revealed inside the horizon when the beacon arrives has $\tau_i'=\tau_i<T$ and contributes $0$ in class $N$ and $-(H_s-\Theta_{s,i}(\tau_i))$ if it can strike $s$ inside $T$. Theorem~\ref{thm:free-beacon} is stated for $p_s=1$; with success probability $p_s$ the class-$O$ term becomes $-[p_sH_s+(1-p_s)\Theta'_{s,i}-\Theta_{s,i}(\tau_i)]\le 0$, where $\Theta'_{s,i}\ge\Theta_{s,i}(\tau_i)$ is the continuation loss of an attacked, catalogued leftover, and the sign classification is unchanged.

\begin{remark}[Continuation after disclosure]
\label{lem:continuation}
After a type-revealing opening, disclosure toward the actors that learned of it is sunk. A witness that saw the opening and can warn has already told, or will tell before a kinetic response can arrive, so removing it afterwards recalls nothing of that disclosure. What removing a catalogued witness $s$ still buys $i$ is $s$'s remaining capacity against $i$: retaliation, and warning of $i$'s later operations to their targets. A subsequent operation against $s$ adds disclosure loss toward actors that did not learn of the opening, and if $s$ itself did not, toward $s$, with the failure lump of a surviving $s$. Receiver $i$ therefore destroys a catalogued leftover $s$ that it can neutralize inside $T$ exactly when that remaining capacity exceeds the continuation cost, including new disclosure, the failure lump, and the cost of finding capacity the beacon located only loosely. Theorem~\ref{thm:free-beacon} takes that comparison to be positive for class $L$ as a worst case for the sender; Theorem~\ref{thm:harm-requires} makes it a condition.
\end{remark}

\begin{theorem}[Beacon increment, signed by reach]
\label{thm:free-beacon}
Let Assumption~\ref{ass:chosen-beacon} and the sender-horizon accounting above hold, and let the continuation comparison of Remark~\ref{lem:continuation} be positive for class $L$ and for already-revealed hostiles that can strike $s$ inside $T$. Then $\Delta V_s=G_s+\sum_i\Delta_i$ over hostile receivers and $\Delta U_s=\Delta V_s-c_{\mathrm{tx}}$, where for receivers whose hostility is not already revealed when the beacon arrives
\begin{enumerate}
\item $i\in O$: $\Delta_i=-\bigl(H_s-\Theta_{s,i}(\tau_i)\bigr)<0$ if $\tau_i'<T$; $\Delta_i=\Theta_{s,i}(\tau_i)\ge 0$ if $\tau_i'\ge T$, which is $0$ when Assumption~\ref{ass:chosen-beacon} holds with equality, since then $\tau_i'\le\tau_i$;
\item $i\in N$: $\Delta_i=\Theta_{s,i}(\tau_i)-\Theta_{s,i}(\tau_i')\ge 0$, strictly positive if $\tau_i<T$ and $\Theta_{s,i}(\tau_i)>\Theta_{s,i}(\tau_i')$, which holds if $\tau_i'\ge T$ and $\Theta_{s,i}(\tau_i)>0$, or if $\tau_i'>\tau_i$ and $\Theta_{s,i}$ is strictly decreasing between them;
\item $i\in L$: $\Delta_i=-\bigl(H_s-\Theta_{s,i}(\tau_i)\bigr)<0$ if $\tau_i'<T$; $\Delta_i=\Theta_{s,i}(\tau_i)\ge 0$ if $\tau_i'\ge T$, strictly positive if $\tau_i<T$ and $\Theta_{s,i}(\tau_i)>0$.
\end{enumerate}
Consequently
\begin{align*}
\{\Delta V_s<0\}&\subseteq\mathcal H:=\{\exists\,i\in O\cup L:\ \tau_i'<T\},
\\
\{\Delta V_s>0\}&\supseteq\bigl(\{G_s>0\}\cup\mathcal N\cup\mathcal D\bigr)\setminus\mathcal H,
\end{align*}
with $\mathcal N:=\{\exists\,i\in N:\ \tau_i<T,\ \Delta_i>0\}$ and $\mathcal D:=\{\exists\,i\in O\cup L:\ \tau_i<T\le\tau_i',\ \Theta_{s,i}(\tau_i)>0\}$, the in-reach openings delayed past the horizon. If no receiver is tipped, $\mathcal H\subseteq\{\exists\,i\in O\cup L:\ \tau_i<T\}$. The beacon lowers the sender's payoff by more than its transmission cost, $\Delta U_s<-c_{\mathrm{tx}}$, only on $\mathcal H$; off $\mathcal H$ it costs the sender at most $c_{\mathrm{tx}}$, and a beacon with $\Delta V_s\ge 0$ on every event is still not worth sending when $c_{\mathrm{tx}}>\E[\Delta V_s]$.
\end{theorem}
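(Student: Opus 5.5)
The plan is to compute each receiver's contribution $\Delta_i$ directly from the sender-horizon accounting, class by class, and then read the event inclusions off the signs. The first step fixes the decomposition. The sender's payoff with and without the beacon differs in three places: the enlarged policy set, worth $G_s$; for each hostile receiver $i$, the change from the loss the sender suffers at $\tau_i$ to the loss at $\tau_i'$; and the cost $c_{\mathrm{tx}}$. Since receivers decide at their own events and coordinating receivers are one actor, these per-receiver losses add. That gives $\Delta V_s=G_s+\sum_i\Delta_i$ and $\Delta U_s=\Delta V_s-c_{\mathrm{tx}}$. Without the beacon the sender is uncatalogued, so a counted event at $\tau_i$ costs it $\Theta_{s,i}(\tau_i)$, which is zero if $\tau_i\ge T$. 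After the beacon the loss depends on the reach class.

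The second step treats the classes in turn. For $i\in O$ with $\tau_i'<T$, I take $p_s=1$, so the sender is in the opening and destroyed, and $\Delta_i=\Theta_{s,i}(\tau_i)-H_s<0$ because $\Theta<H_s$ inside $T$. If instead $\tau_i'\ge T$, the post-beacon loss is zero and $\Delta_i=\Theta_{s,i}(\tau_i)\ge0$. Under equality in Assumption~\ref{ass:chosen-beacon}, the sender-horizon paragraph gives $\hat A_i'\ge\hat A_i$, hence $\tau_i'\le\tau_i$, so $\tau_i\ge T$ and $\Theta_{s,i}(\tau_i)=0$.

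For $i\in N$, the hostile cannot strike $s$ inside $T$, so the sender is only ever a leftover and its loss is $\Theta_{s,i}(\tau_i')$. Proposition~\ref{thm:beacon-shift} gives $\hat A_i'\le\hat A_i$ when $s$ cannot be included, so $\tau_i'\ge\tau_i$. Monotonicity of $\Theta_{s,i}$ then yields $\Delta_i\ge0$, and the two strictness cases follow directly: either $\Theta$ drops to zero past $T$, or $\Theta$ is strictly decreasing between $\tau_i$ and $\tau_i'$.

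For $i\in L$, the same delay $\tau_i'\ge\tau_i$ holds. The new feature is that once $i$ reveals inside $T$, the sender is catalogued and strikable. By the hypothesis that the continuation comparison of Remark~\ref{lem:continuation} is positive, $i$ removes $s$, so the loss is $H_s$ and $\Delta_i=\Theta_{s,i}(\tau_i)-H_s<0$. If $\tau_i'\ge T$, then $\Delta_i=\Theta_{s,i}(\tau_i)$, which is strict when $\tau_i<T$ and $\Theta_{s,i}(\tau_i)>0$. Receivers already revealed when the beacon arrives are covered by the stated conventions. For general $p_s$, the class-$O$ term is replaced by its mixture, which is still $\le0$.

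The third step assembles the inclusions. Every term other than those from $O\cup L$ with $\tau_i'<T$ is nonnegative, and $G_s\ge0$. So $\Delta V_s<0$ forces some $i\in O\cup L$ with $\tau_i'<T$, which is $\mathcal H$. Off $\mathcal H$ all terms are nonnegative, so a single strictly positive contribution from $G_s$, from $\mathcal N$, or from $\mathcal D$ makes $\Delta V_s>0$. Under no tipping, the beacon creates no new in-reach opening inside $T$, so $\tau_i'<T$ implies $\tau_i<T$. The $c_{\mathrm{tx}}$ statements are then immediate from $\Delta U_s=\Delta V_s-c_{\mathrm{tx}}$ and the definition of worth sending.

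The main obstacle is justifying the timing comparisons $\tau_i'\ge\tau_i$ for $N$ and $L$, and $\tau_i'\le\tau_i$ for $O$ under equality, from the margin comparisons. This needs the event time to be monotone in the assessed margin. I would first argue it through the first-crossing definition of $\tau_i$: a pointwise lower margin cannot make the first rational opening occur earlier.
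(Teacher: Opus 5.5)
Your proposal is correct and follows the paper's proof essentially step for step: the same decomposition $\Delta V_s=G_s+\sum_i\Delta_i$, the same class-by-class signs, and the same reading-off of the inclusions from the signs. The class-by-class signs rest on Proposition~\ref{thm:beacon-shift}, which gives $\tau_i'\ge\tau_i$ for $N$ and $L$; on the sender-horizon paragraph, which gives $\tau_i'\le\tau_i$ for $O$ under equality; and on the positive continuation comparison, which forces destruction for $L$. Already-revealed hostiles that can strike $s$ fall into $\mathcal H$ through $\tau_i'=\tau_i<T$. Your one addition is the first-crossing justification for turning margin comparisons into orderings of $\tau_i$ and $\tau_i'$, which the paper simply asserts; note that this argument needs the margin comparison at every candidate time, not only at the single event where Proposition~\ref{thm:beacon-shift} is stated.
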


\begin{corollary}[Sign-probability comparison]
\label{cor:majority}
Under Theorem~\ref{thm:free-beacon} with no tipping, let $h=\Pr(\mathcal H)$ and $g=\Pr(\{G_s>0\}\cup\mathcal N)$. Then $\Pr(\Delta V_s<0)\le h$ and $\Pr(\Delta V_s>0)\ge g-\Pr\bigl(\mathcal H\cap(\{G_s>0\}\cup\mathcal N)\bigr)$. Take $n$ independent receivers with common $\eta$, $\pi_N$, $p$, and $\gamma$ as a comparison template for the sender, not as a population model of the universe: each hostile with probability $\eta$, in class $N$ with probability $\pi_N$, launching inside $T$ with probability $p$, with strictly positive delay value on class $N$, and offering an acceptable exchange with probability $\gamma$ independently of its hostility and reach. Then
\begin{align*}
h&=1-\bigl(1-\eta(1-\pi_N)p\bigr)^n,
\\
g&=1-(1-\gamma)^n\bigl(1-\eta\pi_N p\bigr)^n,
\\
\Pr(\Delta V_s>0)&\ge g(1-h),
\end{align*}
so the bound $\Pr(\Delta V_s>0)\ge g(1-h)$ exceeds the harm bound $h$ whenever $g>h/(1-h)$. The displayed $g$ omits $\mathcal D$, which only helps the sender. With $\gamma=0$, $g>h$ if and only if $\pi_N>1/2$; that comparison is weaker than $g>h/(1-h)$. If $\eta(1-\pi_N)p>0$, then $h\to 1$ as $n\to\infty$ and $g(1-h)\to 0$. These are bounds on the signs of the strategic increment; whether to send compares $\E[\Delta V_s]$, which needs the sender's valuation of each event, with $c_{\mathrm{tx}}$.
\end{corollary}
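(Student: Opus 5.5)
The statement to prove is Corollary~\ref{cor:majority}. The plan is to reduce everything to the event inclusions of Theorem~\ref{thm:free-beacon} and then compute probabilities under the independent template.

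\textbf{Step 1: the two sign bounds.} By Theorem~\ref{thm:free-beacon}, $\{\Delta V_s<0\}\subseteq\mathcal H$, so $\Pr(\Delta V_s<0)\le h$ is immediate. The same theorem gives $\{\Delta V_s>0\}\supseteq(\{G_s>0\}\cup\mathcal N\cup\mathcal D)\setminus\mathcal H$. Dropping $\mathcal D$ only shrinks this set, so
\[
\Pr(\Delta V_s>0)\ge\Pr(E)-\Pr(E\cap\mathcal H),\qquad E:=\{G_s>0\}\cup\mathcal N .
\]
The no-tipping hypothesis is used to replace $\tau_i'$ by $\tau_i$ in $\mathcal H$. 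That is, $\mathcal H\subseteq\{\exists\,i\in O\cup L:\tau_i<T\}$, so $\mathcal H$ becomes an event about pre-beacon launches by in-reach hostiles.

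\textbf{Step 2: the template probabilities.} Each receiver independently does one of the following:
\begin{itemize}
\item it is a hostile in reach ($O\cup L$) that launches inside $T$, with probability $\eta(1-\pi_N)p$;
\item it is a hostile in class $N$ that launches inside $T$, with probability $\eta\pi_N p$; the delay value is assumed strictly positive there, so this event lies in $\mathcal N$;
\item it offers an exchange, independently of hostility and reach, with probability $\gamma$.
\end{itemize}
Taking complements over the $n$ independent receivers gives the stated formulas for $h$ and $g$. The per-receiver events ``no exchange'' and ``not a launching class-$N$ hostile'' are independent, so their probabilities multiply, giving $(1-\gamma)^n(1-\eta\pi_Np)^n$ as the probability of not-$E$.

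\textbf{Step 3: $\Pr(\Delta V_s>0)\ge g(1-h)$, the main obstacle.} This needs $\Pr(E\setminus\mathcal H)\ge g(1-h)$, which is not an identity because $E$ and $\mathcal H$ are built from the same receivers. The first approach to try is a per-receiver argument. Not-$\mathcal H$ is an intersection of per-receiver events $A_i=\{i$ is not an in-reach launcher$\}$. $E$ is a union of per-receiver events $F_i=\{i$ exchanges or is an $N$-launcher$\}$. Within one receiver, conditioning on $A_i$ can only raise the probability of $F_i$: the class-$N$ launch event is disjoint from the in-reach launch event, and the exchange event is independent of both. This gives $\Pr(F_i\mid A_i)\ge\Pr(F_i)$. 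Independence across receivers then yields $\Pr(E\mid\neg\mathcal H)\ge g$. If the per-receiver monotonicity fails in some degenerate case, the fallback is to invoke Harris/FKG, since $E$ is increasing and $\neg\mathcal H$ is decreasing in the appropriate coordinatewise orders, which are arranged so that the $N$-versus-in-reach choice is a product-compatible order.

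\textbf{Step 4: the remaining claims.} These are algebra.
\begin{itemize}
\item $g(1-h)>h$ is equivalent to $g>h/(1-h)$.
\item With $\gamma=0$, $g>h$ iff $\eta\pi_Np>\eta(1-\pi_N)p$, i.e.\ $\pi_N>1/2$, assuming $\eta p>0$. This is weaker than $g>h/(1-h)$ since $h/(1-h)\ge h$.
\item If $\eta(1-\pi_N)p>0$, then $(1-\eta(1-\pi_N)p)^n\to0$, so $h\to1$ and $g(1-h)\le 1-h\to0$.
\end{itemize}
Finally, the closing sentence on $\E[\Delta V_s]$ versus $c_{\mathrm{tx}}$ restates the last clause of Theorem~\ref{thm:free-beacon} and needs no proof.
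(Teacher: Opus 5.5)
Your proof is correct and is the paper's argument. Your per-receiver step $\Pr(F_i\mid A_i)\ge\Pr(F_i)$ is the paper's inequality $\Pr(A_i\cap F_i^c)=(1-\eta p)(1-\gamma)\le\Pr(A_i)\Pr(F_i^c)$, obtained from the disjoint launch events and the independent exchange; multiplying over independent receivers gives $\Pr(\mathcal H\cap E)\le hg$, which is your $\Pr(E\mid\neg\mathcal H)\ge g$ in another form.

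The Harris/FKG fallback is not needed, because the per-receiver step never fails: when $\Pr(A_i)=0$ you have $h=1$ and the bound is trivial. As written, the fallback also points the wrong way. Harris gives $\Pr(E\cap\neg\mathcal H)\ge\Pr(E)\Pr(\neg\mathcal H)$ only when $E$ and $\neg\mathcal H$ are monotone in the same direction, not in opposite directions as you state.
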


Theorem~\ref{thm:free-beacon} bounds harm beyond the transmission cost by the event $\mathcal H$ that some in-reach hostile opens inside the horizon. That bound treats every such opening as fatal. Theorem~\ref{thm:harm-requires} asks what an in-reach hostile that has received the beacon must believe in order to open at all, and what it must gain in order to remove the sender, and finds that both are narrow.

\section{Illustrative values}
\label{app:illustration}

The two tables instantiate the displayed inequalities of Section~\ref{sec:coverage} and Appendix~\ref{app:beacon} at chosen numbers. They are not parameter estimates of unknown hostiles and not a census of the sky.

\begin{table*}[t]
\centering
\small
\setlength{\tabcolsep}{3pt}
\begin{tabular}{L{1.2in}L{1.55in}L{1.25in}L{2.0in}}
\toprule
Bound & Chosen numbers & Instantiated value & What the inequality says \\
\midrule
Capable witness,~\eqref{eq:capable-witness} & $\pi_w=1$, $q_w=1$, $h(K)=0.02$ & $B_{\mathrm{cost}}(K)\ge 0.98\,H_i$ & Leaving one sure capable warner requires relative growth of almost the whole stake \\
\addlinespace
Capable witness, uncertain capability & $\pi_w=0.4$, $q_w=1$, $h(K)=0.02$ & $B_{\mathrm{cost}}(K)\ge 0.38\,H_i$ & Uncertainty discounts the lump; it does not turn it into a rate \\
\addlinespace
Uncatalogued witness,~\eqref{eq:uncatalogued-witness} & $\pi_u=0.3$, $q_u=1$, $h(K)=0.02$, $K$ = every known leftover & $B_{\mathrm{cost}}(K)\ge 0.28\,H_i$ & Removing every leftover the attacker has found does not remove this lump \\
\addlinespace
Believer's last attack,~\eqref{eq:believer-uncatalogued} & populated forest ($\pi_u=1$), $q=1$, $f=0.02$, any $K$ & $B_{\mathrm{cost}}(K)\ge 0.98\,H_i$ & The believer strikes only if removal-cost growth is worth its survival \\
\addlinespace
Constant-rate share,~\eqref{eq:coverage-share} & $b/d=1/9$, below saturation & $m/R\ge 0.9$ & The opening must take nine tenths of the additive measure \\
\addlinespace
Constant-rate share & $b=d$, below saturation & $m/R\ge 1/2$ & Equal rates require covering half the measure \\
\addlinespace
Saturation & two leftovers each able to destroy $i$; $D$ capped at $H_i$ & uniform $d$ cannot be held fixed & The share is then a loose corollary; the lump above is the bound \\
\bottomrule
\end{tabular}
\caption{Illustrative values of the coverage bounds. The entries instantiate the displayed inequalities at chosen numbers. They are not parameter estimates of unknown hostiles and not a census of the sky; $\pi_u$ in particular is a credence about actors never found, which no observation fixes, and which a Dark Forest believer holds at one as a premise. The share $d/(d+b)$ remains a corollary, exact only below saturation.}
\label{tab:illustration}
\end{table*}

\begin{table*}[t]
\centering
\small
\setlength{\tabcolsep}{3pt}
\begin{tabular}{L{1.2in}L{1.55in}L{1.25in}L{2.0in}}
\toprule
Bound & Chosen numbers & Instantiated value & What the inequality says \\
\midrule
Tipping,~\eqref{eq:tipping-general} & shortfall $0.20\,H_i$, $p_s=0.99$, added $B=0.05\,H_i$, $\delta_s=H_i$ & right-hand side $0.0395\,H_i$ & Does not tip: the failure lump exceeds the net added benefit \\
\addlinespace
Tipping, last assessed gap & shortfall $0.02\,H_i$, $p_s=1$, added $B=0.05\,H_i$ & $0.05>0.02$ & Tips when the sender closes the last small assessed gap and inclusion is sure \\
\addlinespace
Unmet witnesses,~\eqref{eq:patience-interval} at $p'=0$ & $\pi_u=0.3$, $q=1$, $B_i(\mathcal K_i\cup\{s\})=0.05\,H_i$ & $0.05<0.30$ & Not tipped: the opening does not pay for the hunters it has not found; a believer has $\pi_u=1$ \\
\addlinespace
Class $N$ delay & leftover never known empty; $\pi_s>0$ & $\hat m_{s,i}>0$ from warning & The beacon demonstrates warning; the margin falls \\
\addlinespace
Concealed opening & observe-and-relay fails for this strike & sender not a witness to that opening & Route (d) of Theorem~\ref{thm:harm-requires}, not a missing recipient \\
\bottomrule
\end{tabular}
\caption{Illustrative values of last-assessed-gap tipping and class-$N$ delay. The entries instantiate the displayed inequalities at chosen numbers. They are not parameter estimates of unknown hostiles and not a census of the sky.}
\label{tab:illustration-beacon}
\end{table*}

\section{Proofs}
\label{app:proofs}

\begin{proof}[Proof of Theorem~\ref{thm:opening-coverage}]
Rationality requires $A(K)\ge 0$. By~\eqref{eq:advantage-split} and $C(K)\ge 0$, $B(K)\ge D(N\setminus K)+C(K)\ge D(N\setminus K)$. For $W'\subseteq N\setminus K$, monotonicity gives $D(W')\le D(N\setminus K)\le B(K)\le B(N)$.
\end{proof}

\begin{proof}[Proof of Corollary~\ref{cor:assessed-coverage}]
Rationality given $\mathcal I_i$ is $\E[A(K)\mid\mathcal I_i]\ge 0$, which by~\eqref{eq:advantage-split} and $C(K)\ge 0$ is $B(K)\ge\hat D_i(K)+C(K)\ge\hat D_i(K)$. By Assumption~\ref{ass:open-roster}, $\mathcal U_i\subseteq N\setminus K$ for every catalogued $K$, so monotonicity gives $D(N\setminus K)\ge D(\mathcal U_i)$ pointwise, and taking expectations gives the second inequality. Its right side has no $K$ in it.
\end{proof}

\begin{proof}[Proof of Corollary~\ref{cor:capable-witness}]
Theorem~\ref{thm:opening-coverage} with $W'=\{w\}$ gives $\pi_wq_wH_i\le D(\{w\})\le B(K)\le h(K)H_i+B_{\mathrm{cost}}(K)$.
\end{proof}

\begin{proof}[Proof of Corollary~\ref{cor:uncatalogued-witness}]
On the event, which has probability $\pi_u$, $u\in N\setminus K$ for every catalogued $K$ by Assumption~\ref{ass:open-roster}, and monotonicity gives $D(N\setminus K)\ge D(\{u\})\ge q_uH_i$ as in Corollary~\ref{cor:capable-witness}. Taking expectations, $\hat D_i(K)\ge\pi_uq_uH_i$. Corollary~\ref{cor:assessed-coverage} gives $B(K)\ge\hat D_i(K)$, and the split $B(K)\le h(K)H_i+B_{\mathrm{cost}}(K)$ gives the second inequality.
\end{proof}

\begin{proof}[Proof of Corollary~\ref{cor:constant-rate}]
Theorem~\ref{thm:opening-coverage} and the two rate bounds give $b\,m\ge B(K)\ge D(N\setminus K)\ge d\,(R-m)$, so $(d+b)m\ge dR$, which is~\eqref{eq:coverage-share} since $d+b>0$ and $R>0$. For $d>0$, $d/(d+b)=1/(1+b/d)\to 1$ as $b/d\to 0$, and $\mu(N\setminus K)/R=1-m/R\to 0$.
\end{proof}

\begin{proof}[Proof of Proposition~\ref{thm:believer-ratio}]
Corollary~\ref{cor:capable-witness} gives the first display, with $\pi_w=\pi$, $q_w\ge q$, and $h(K)\le f$. Corollary~\ref{cor:uncatalogued-witness} gives the second, with $\pi_u=1$, $q_u\ge q$, and $h(K)\le f$.
\end{proof}

\begin{proof}[Proof of~\eqref{eq:believer-coverage}]
Corollary~\ref{cor:constant-rate} with $b=b_{\mathrm{haz}}+b_{\mathrm{cost}}$ gives the first inequality under the premises of Proposition~\ref{thm:believer-ratio}; $b_{\mathrm{haz}}\le f$ and $d\ge q$ give $b_{\mathrm{haz}}/d\le f/q$, and dividing numerator and denominator by $d$ gives the second.
\end{proof}

\begin{proof}[Proof of Proposition~\ref{thm:strike-or-hide}]
Divide $P_{\mathrm{strike}}(K)$ by $P_{\mathrm{hide}}$, factor by factor. For $j\notin K$ the ratio is $\bigl[1-v_jq_j-(1-v_j)r_j\phi_j\bigr]/(1-r_j\phi_j)=1-v_j(q_j-r_j\phi_j)/(1-r_j\phi_j)$; for $j\in K$ the strike has no factor and the hide has $1-r_j\phi_j$, whose reciprocal moves to the right side as displayed. When $q_j\ge r_j\ge r_j\phi_j$ the numerator $v_j(q_j-r_j\phi_j)$ is nonnegative, so each left factor is at most one, and it is nonincreasing in $q_j$; the right factors are nonincreasing in $r_j$. With $r_j=q_j=1$ on $S$ and $0$ off it, a striker outside $K$ contributes $1-v_j-(1-v_j)\phi_j=(1-v_j)(1-\phi_j)$ to the strike and $1-\phi_j$ to the hide, a hider contributes $1$ to both, and a striker inside $K$ contributes $1$ to the strike and $1-\phi_j$ to the hide, which gives~\eqref{eq:strike-or-hide-strikers}.
\end{proof}

\begin{proof}[Proof of Theorem~\ref{thm:not-equilibrium}]
(a) With $S=N\setminus\{i\}$, Proposition~\ref{thm:strike-or-hide} says $i$ prefers to strike $K$ only if $\prod_{j\notin K}(1-v_j)>\prod_{j\in K}(1-\phi_j)$. Since $\mathcal U_i\subseteq N\setminus K$ for every catalogued $K$ and every factor is at most one, $\prod_{j\notin K}(1-v_j)\le\prod_{j\in\mathcal U_i}(1-v_j)\le\prod_{j\ne i}(1-\phi_j)\le\prod_{j\in K}(1-\phi_j)$ by~\eqref{eq:visible-strikes}, so the strict inequality fails; when~\eqref{eq:visible-strikes} is strict the hide is strictly better. A profile in which some player strictly prefers to deviate is not an equilibrium; when~\eqref{eq:visible-strikes} holds with equality every player is indifferent, and the profile is an equilibrium only in that degenerate case, which requires every strike to be exactly as unseen as every hider is unfound. (b) If no hunter strikes what it locates, $r_j=q_j=0$ for all $j$, both survival probabilities in~\eqref{eq:survival-probabilities} equal one for every action of every hunter, so no deviation is profitable. (c) The first sentence is~\eqref{eq:strike-or-hide-strikers}. For the second, $\mathcal U_i\cap K=\emptyset$ because $K$ is catalogued, so $\mathcal U_i\subseteq S\setminus K$, and $\prod_{j\in S\setminus K}(1-v_j)\le\prod_{j\in\mathcal U_i}(1-v_j)\le\prod_{j\ne i}(1-\phi_j)\le\prod_{j\in S\cap K}(1-\phi_j)$.
\end{proof}

\begin{proof}[Proof of Theorem~\ref{thm:detection-decomp}]
On the event $\mathcal{S}$, the posterior of $X$ reweights the prior by $\omega$, so $\E[S_0\mid\mathcal{S}]=\E[\omega S_0]/\E[\omega]$. Expanding the covariance gives $\E[\omega S_0]=\E[\omega]\E[S_0]+\mathrm{Cov}(\omega,S_0)$. Divide by $\E[\omega]>0$. Subtract $\E[u\mid\mathcal{S}]$ from both sides.
\end{proof}

\begin{proof}[Proof of Proposition~\ref{thm:jebari-selection}]
Let $R,R'$ be i.i.d.\ copies. Both $\omega$ and $S_0$ nondecreasing in $R$ gives $\bigl(\omega(R)-\omega(R')\bigr)\bigl(S_0(R)-S_0(R')\bigr)\ge 0$ almost surely, hence $\E\bigl[(\omega-\omega')(S_0-S_0')\bigr]\ge 0$. Expanding and using identical margins yields $2\mathrm{Cov}(\omega,S_0)\ge 0$. Theorem~\ref{thm:detection-decomp} then raises the posterior wait margin by a nonnegative selection term.
\end{proof}

\begin{proof}[Proof of Theorem~\ref{thm:detection-trigger}]
Theorem~\ref{thm:detection-decomp} and $\E[S_1\mid\mathcal{S}]\le 0$.
\end{proof}

\begin{proof}[Proof of Proposition~\ref{thm:beacon-shift}]
If $i$ cannot include $s$, the removed set is unchanged and the witnesses gain $s$: $\hat D_i'=\E[D_i(\mathcal C_i\cup\{s\}\cup\mathcal U_i^{-s})\mid\mathcal I_i\vee\mathcal S]\ge\E[D_i(\mathcal C_i\cup\mathcal U_i^{-s})\mid\mathcal I_i\vee\mathcal S]\ge\hat D_i$, the first step by monotonicity of $D_i$ and the second by Assumption~\ref{ass:chosen-beacon}, so $\hat A_i'=B_i(\mathcal K_i)-\hat D_i'\le\hat A_i$, strictly when the first step is strict. If $i$ can include $s$, the attempt succeeds with probability $p_s$, in which case the removed set is $\mathcal K_i\cup\{s\}$ and the witnesses are $\mathcal C_i\cup\mathcal U_i^{-s}$, and fails with probability $1-p_s$, in which case the removed set is $\mathcal K_i$ and the witnesses are $\mathcal C_i\cup\{s\}\cup\mathcal U_i^{-s}$, with $s$ adding at least $\delta_s$. Taking the expectation over the attempt, and bounding the disclosure loss from witnesses other than $s$ below by $\hat D_i$ using the assumption, gives the display, with equality when the assumption holds with equality. The constant-rate forms substitute $B_i=b_i\mu$ and $D_i=d_i\mu$ with $\mu(\{s\})$ assessed as $\hat m_{s,i}$: what $i$ counts as remaining rises by $\hat m_{s,i}$ in both reach cases, and the assumption bounds the uncatalogued term below by its pre-beacon value, so $\hat R_i'\ge\hat R_i+\hat m_{s,i}$; what the opening removes rises by $p_s\hat m_{s,i}$ in expectation when $i$ can include $s$, and the ratio bounds follow.
\end{proof}

\begin{proof}[Proof of Theorem~\ref{thm:tipping}]
If $i$ cannot include $s$, Proposition~\ref{thm:beacon-shift} gives $\hat A_i'\le\hat A_i<0$. If $i$ can, $0\le\hat A_i'\le B_i(\mathcal K_i)+p_s[B_i(\mathcal K_i\cup\{s\})-B_i(\mathcal K_i)]-(1-p_s)\delta_s-\hat D_i$; adding $\hat D_i-B_i(\mathcal K_i)$ to both sides gives~\eqref{eq:tipping-general}. If the sender adds no timing benefit, or if $(1-p_s)\delta_s$ exceeds the expected added benefit, the right side of~\eqref{eq:tipping-general} is at most zero while the left side is $-\hat A_i>0$. In the constant-rate regime, $\hat A_i=b_im_i-d_i(\hat R_i-m_i)$, the sender adds $b_i\hat m_{s,i}$ when removed and $\delta_s=d_i\hat m_{s,i}$ when it survives, so~\eqref{eq:tipping-general} is $d_i(\hat R_i-m_i)\le b_ip_s\hat m_{s,i}-(1-p_s)d_i\hat m_{s,i}$, which rearranges to~\eqref{eq:tipping}. With $b_i=0$, \eqref{eq:tipping} forces $\hat R_i\le m_i$, so $\hat A_i\ge 0$ already held. Under the stated $\varepsilon$ condition the right side of~\eqref{eq:tipping} is below $\varepsilon\le\hat R_i-m_i$, so~\eqref{eq:tipping} fails.
\end{proof}

\begin{proof}[Proof of Corollary~\ref{cor:uncovered-kappa}]
On the event that $w$ survives the opening and can bring about $i$'s destruction, $w$ is a witness and monotonicity of $D_i$ gives disclosure loss at least $D_i(\{w\})\ge q_wH_i$; taking expectations gives $\hat D_i'\ge(1-p')\pi_wq_wH_i$. Tipping requires $\hat A_i'\ge 0$, and $\hat A_i'\le B_i(\mathcal K_i\cup\{s\})-\hat D_i'$ because the removed set is at most $\mathcal K_i\cup\{s\}$ and $B_i$ is nondecreasing, which gives~\eqref{eq:patience-interval}; the split of $B_i$ gives the removal-cost form. For the sender, survival of the attempt has probability $1-p_s$ and capability probability $\pi_s$, which is the failure term of Theorem~\ref{thm:tipping}. If $s$ cannot be included, the removed set is $\mathcal K_i$ and every other $w$'s survival probability is unchanged. In the constant-rate regime, Theorem~\ref{thm:tipping} gives $\hat R_i-m_i\le(b_i/d_i)(m_i+\hat m_{s,i})$, and $\kappa(m_i+\hat m_{s,i})\le\hat R_i-m_i$ then gives $b_i/d_i\ge\kappa$.
\end{proof}

\begin{proof}[Proof of Theorem~\ref{thm:free-beacon}]
The transmission cost enters $\Delta U_s$ only as the constant $-c_{\mathrm{tx}}$, so $\Delta V_s=\Delta U_s+c_{\mathrm{tx}}$ collects the exchange value and the receiver terms, and retained silence gives $G_s\ge 0$. Class $O$: without the beacon the sender is uncatalogued and survives $i$'s opening as a leftover at loss $\Theta_{s,i}(\tau_i)$, which is zero if $\tau_i\ge T$; with the beacon and $\tau_i'<T$ the sender is in the opening at loss $H_s$; the difference is as displayed and negative because $H_s>\Theta_{s,i}$. If $\tau_i'\ge T$ the sender is in no counted opening and pays nothing inside $T$, so the increment is the avoided $\Theta_{s,i}(\tau_i)\ge 0$; under equality in Assumption~\ref{ass:chosen-beacon}, $\hat A_i'\ge\hat A_i$ gives $\tau_i'\le\tau_i$, so $\tau_i\ge T$ and the increment is $0$. Class $N$: $i$ cannot strike $s$ inside $T$, so the sender's exposure is leftover loss from revealed hostility, and $\Delta_i$ is the displayed difference, nonnegative because $\hat A_i'\le\hat A_i$ gives $\tau_i'\ge\tau_i$ and $\Theta_{s,i}$ is nonincreasing, and strictly positive exactly when the two losses differ. Class $L$: $\hat A_i'\le\hat A_i$ so $\tau_i'\ge\tau_i$. If $\tau_i'<T$, the opening is still counted inside $T$, the sender is catalogued, and the positive continuation comparison destroys the sender at loss $H_s$, the same difference as class $O$. If $\tau_i'\ge T$, there is no revelation inside $T$; without the beacon the sender would have paid $\Theta_{s,i}(\tau_i)$, so the increment is that leftover loss, nonnegative, and strictly positive when $\tau_i<T$ and $\Theta_{s,i}(\tau_i)>0$. Already-revealed hostiles contribute $0$ in class $N$ and, if they can strike $s$ inside $T$, the class-$O$ harm term with $\tau_i'=\tau_i<T$, so they lie in $\mathcal H$. Every term off $\mathcal H$ is therefore nonnegative, which gives the first inclusion, and each of $\{G_s>0\}$, $\mathcal N$, and $\mathcal D$ supplies a strictly positive term, which gives the second. The last inclusion is the no-tipping counterfactual: the beacon does not create an in-reach opening that would not have occurred inside the sender's horizon. The statements about $\Delta U_s$ follow from $\Delta U_s=\Delta V_s-c_{\mathrm{tx}}$: $\Delta U_s<-c_{\mathrm{tx}}$ is $\Delta V_s<0$, and off $\mathcal H$, $\Delta V_s\ge 0$ gives $\Delta U_s\ge-c_{\mathrm{tx}}$.
\end{proof}

\begin{proof}[Proof of Corollary~\ref{cor:majority}]
The two bounds are the inclusions of Theorem~\ref{thm:free-beacon}. For receiver $i$ let $X_i$ indicate a hostile in $O\cup L$ launching inside $T$, $Y_i$ a hostile in $N$ launching inside $T$, and $Z_i$ an acceptable exchange offer, with $x=\eta(1-\pi_N)p$, $y=\eta\pi_N p$, $z=\gamma$. Then $\mathcal H=\{\sum_i X_i\ge 1\}$ and $\{G_s>0\}\cup\mathcal N=\{\sum_i (Y_i+Z_i)\ge 1\}$, which gives $h$ and $g$ as complements of products. A receiver cannot have both $X_i=1$ and $Y_i=1$, and $Z_i$ is independent of $(X_i,Y_i)$, so $\Pr(X_i=Y_i=Z_i=0)=(1-x-y)(1-z)\le(1-x)(1-y)(1-z)$. Taking products over $i$,
\[
\Pr\bigl(\mathcal H^c\cap(\{G_s>0\}\cup\mathcal N)^c\bigr)\le\Pr(\mathcal H^c)\Pr\bigl((\{G_s>0\}\cup\mathcal N)^c\bigr),
\]
which is $\Pr(\mathcal H\cap(\{G_s>0\}\cup\mathcal N))\le hg$ after expanding both sides with inclusion--exclusion. Hence $\Pr(\Delta V_s>0)\ge g-hg$. The last claim compares $x$ with $y$ when $\gamma=0$, because $1-(1-x)^n$ is increasing in $x$. The $n\to\infty$ claim is $1-(1-x)^n\to 1$ whenever $x>0$.
\end{proof}

\begin{proof}[Proof of Proposition~\ref{thm:attempt-threshold}]
Can include means
\[
p_s\bigl[B_i(\mathcal K_i\cup\{s\})-B_i(\mathcal K_i)\bigr]\ge(1-p_s)\delta_s,
\]
and $\delta_s\ge\pi_sq_sH_i\ge\pi_sqH_i$ by Corollary~\ref{cor:capable-witness} and chain of suspicion. With the stated bound on the added benefit, $p_s(h_s+\beta_s)H_i\ge(1-p_s)\pi_sqH_i$. Dividing by $H_i>0$ and rearranging gives $p_s(\pi_sq+h_s+\beta_s)\ge\pi_sq$, which is~\eqref{eq:attempt-threshold} since the bracket is positive.
\end{proof}

\begin{proof}[Proof of Theorem~\ref{thm:harm-requires}]
By Theorem~\ref{thm:free-beacon}, $\Delta U_s<-c_{\mathrm{tx}}$ is $\Delta V_s<0$, which requires a receiver in $O\cup L$ with a counted opening $\tau_i'<T$, and every negative term there is the sender's destruction, so harm beyond the transmission cost requires that $i$ opens inside $T$ and removes $s$. That opening is a rational first attack given $\mathcal I_i$ by a receiver holding the Dark Forest premises, so Corollary~\ref{cor:uncatalogued-witness} applies to it with $\pi_u=1$ from the populated-forest premise and $q_u\ge q$ from chain of suspicion, which is the displayed bound. If $i\in O$, it attempted inclusion, hence can include, hence Proposition~\ref{thm:attempt-threshold} gives~\eqref{eq:attempt-threshold}; removal is the attempt succeeding, which is (a). If $i\in L$, it opened without attempting $s$. Either the opening is concealed from $s$, which is (d) once $i$ later removes $s$, and Remark~\ref{lem:continuation} gives the cost of that operation; or $s$ learns of it, in which case Assumption~\ref{ass:observe-relay} makes $s$ a witness that can bring about $i$'s destruction with probability $\pi_s$, and Theorem~\ref{thm:opening-coverage} with Corollary~\ref{cor:capable-witness} requires $B_{\mathrm{cost}}(\mathcal K_i)\ge(\pi_sq-h(\mathcal K_i))H_i$ for the opening to be rational. If $\pi_s>0$ that is the urgency of (b); if $\pi_s=0$ it is (c). In both, $s$ has learned of the opening and warned, so by Remark~\ref{lem:continuation} the operation that removes $s$ pays only for $s$'s remaining capacity against $i$. Already-revealed hostiles that can strike $s$ inside $T$ are in case (b), (c), or (d) with their opening at $\tau_i$. Off that event every receiver term is nonnegative, so $\Delta V_s\ge 0$ and $\Delta U_s\ge -c_{\mathrm{tx}}$, and the strictly positive terms are those of Theorem~\ref{thm:free-beacon}.
\end{proof}

\begin{proof}[Proof of Corollary~\ref{cor:decoupled-sender}]
Cases (a) and (c) of Theorem~\ref{thm:harm-requires} are excluded by hypothesis, leaving (b) and (d).
\end{proof}

\bibliographystyle{ACM-Reference-Format}
\bibliography{references}

\end{document}